\def\part{{\color{black}{part}}}
\newtheorem*{sandwichability}{Sandwichability}
\newtheorem*{costconsist}{$\gamma$-consistency}
\newtheorem*{rgfc}{Random Graphs of Bounded Cost}
\newtheorem*{tacn}{Thickness and Condition Number}
\newtheorem*{cg}{Cost-geometries}
\newtheorem*{rba}{Rank Based Augmentation}
\newtheorem*{navigability}{Navigability}
\newtheorem*{geometry}{Geometry}
\newtheorem*{optimizer}{Entropic Optimizer}
\newtheorem*{convexity}{Convexity}
\newtheorem*{edgeprofile}{Edge Profile}
\newtheorem*{epentropy}{Edge Profile Entropy}
\newtheorem*{coherence}{Coherence}
\newtheorem*{richness}{Uniform Richness}
\newtheorem*{substrate}{Substrate}
\newtheorem*{reducibility}{Reducibility}
\def\R{\mathrm{I\!R}}
\def\P{\mathbb{P}}
\def\erdos{Erd\H{o}s}
\def\renyi{R\'{e}nyi}
\def\ER{\erdos-\renyi}
\def\ENT{{\sc Ent}}
\def\MU{Thickness}
\newcommand{\sand}{sandwichable}
\newtheorem{theorem}{Theorem}
\newtheorem{proposition}{Proposition}
\newtheorem{definition}{Definition}
\newtheorem{lemma}{Lemma}
\newcommand{\dima}[1]{{\color{black}{#1}}}
\newcommand{\blue}[1]{{\color{black}{#1}}}
\newcommand{\red}[1]{{\color{black}{#1}}}
\newcommand{\mg}[1]{{\color{black}{#1}}}
\def\P{\mathbb{P}}
\newcommand{\reals}{\mathbb{R}}
\newcommand{\nats}{\mathbb{N}}
\begin{document}

\title{Navigability is a Robust Property}

\author{
Dimitris Achlioptas 
\thanks{Research supported by a European Research Council (ERC) Starting Grant (StG-210743) and an Alfred P. Sloan Fellowship.}  
\\ Department of Computer Science\\ University of California, Santa Cruz\\{\tt optas@cs.ucsc.edu}\\ 
\and 
Paris Siminelakis
\thanks{Supported in part by an Onassis Foundation Scholarship.}
 \\ Department of Electrical Engineering\\ Stanford University\\{\tt psimin@stanford.edu}
}

\date{\empty}
 
\maketitle 

\begin{abstract} 
The Small World phenomenon has inspired researchers across a number of fields. A breakthrough in its understanding was made by Kleinberg who introduced Rank Based Augmentation (RBA): add to each vertex independently an arc to a random destination selected from a carefully crafted probability distribution. Kleinberg proved that RBA makes many networks \emph{navigable}, i.e., it allows greedy routing to successfully deliver messages between any two vertices in a polylogarithmic number of steps. We prove that navigability is an inherent property of many random networks, arising without coordination, or even independence assumptions.
\end{abstract}
  
\newpage
\section{Introduction}
 
The Small World phenomenon refers to the fact that there exist short chains of acquaintances between most pairs of people in the world, popularly known as Six Degrees of Separation~\cite{watts2003six}. Milgram's famous 1967 experiment~\cite{Millgram} showed that not only such chains exist, but they can also be found in a decentralized manner. Specifically, each participant in the experiment was handed a letter addressed to a certain person and was told of some general characteristics of the person, including their occupation and location. They were then asked to forward the letter to the individual they knew on a first-name basis who was most likely to know the recipient. Based on the premise that similar individuals have higher chance of knowing each other (homophily),  the participants typically forwarded the message to their contact most similar to the target, a strategy that yielded remarkably short paths for most letters that reached their target (many did not).
 
Kleinberg's groundbreaking work, formulated mathematically the property of finding short-paths in a decentralized manner as \emph{navigability}~\cite{STOC, Nature}. Since then, much progress has been made~\cite{ICM} and the concept of navigability has found  applications in the design of P2P networks~
\cite{Freenet00,Zhang}, data-structures~\cite{aspnes2002fault,Slivkins05} and search algorithms~\cite{Manku04,zeng2005near}. Key to decentralization is shared knowledge in the form of geometry, i.e., shared knowledge of a (distance) function on pairs of vertices (not necessarily satisfying the triangle inequality).  

\begin{geometry}
A \emph{geometry} $(V,d)$ consists of a set of vertices $V$ and a distance function $d:V\times V\to \R_{+}$, where $d(x,y)\geq 0$, $d(x,y)=0$ iff $x=y$,  and  $d(x,y)=d(y,x)$, i.e., $d$ must be a semi-metric.  
\end{geometry}

Given a graph $G(V,E)$ on a geometry $(V,d)$, a \emph{decentralized search algorithm} is any algorithm that given a target vertex $t$ and current vertex $v$ selects the next edge $\{v,u\} \in E$ to cross by only considering the distance of each neighbor $u$ of $v$ to the target $t$, i.e., $d(u,t)$. The allowance of paths of polylogarithmic length in the definition of navigability, below, is motivated by the fact that in any graph with constant degree the diameter is $\Omega(\log(n))$, reflecting an allowance for polynomial loss due to the lack of global information. 

\begin{navigability}
A graph $G(V,E)$ on geometry $(V,d)$ is \red{$d$-\emph{navigable}} if there exists a decentralized search algorithm which given any two $s,t \in V$ will find a path from $s$ to $t$ of length $O\left(\mathrm{poly}(\log |V|)\right)$.
\end{navigability}

In his original work on navigability~\cite{STOC, Nature}, Kleinberg showed that if $G$ is the 2-dimensional grid then adding a single random edge independently to each $v \in V$ results in a navigable graph (with $d$ being the L1 distance on the grid). The distribution for selecting the other endpoint $u$ of each added edge is crucial. Indeed, if it can only depend on $d(v,u)$ and distinct vertices are augmented independently, Kleinberg showed that there is a \emph{unique} suitable distribution, the one in which the probability is proportional to $d(v,u)^{-2}$ (and, more generally, $d(v,u)^{-r}$ for $r$-dimensional lattices). The underlying principle behind Kleinberg's augmentation scheme has by now become known as \emph{Rank Based Augmentation} (RBA)~\cite{NIPS,Liben}.	
\begin{rba}
Given a geometry $(V,d)$, a vertex $v \in V$, and $\ell \ge 0$, let $N_{v}(\ell)$ be the \red{number} of vertices \emph{within} distance $\ell$ from $u$. In RBA, the probability of augmenting $v$ with an edge to any $u \in V$ is
\begin{equation}\label{eq:rba}
P(v,u) \propto \frac{1}{N_{u}\left(d(v,u)\right)} \enspace .
\end{equation}  
\end{rba}  
 
The intuition behind RBA is that navigability is attained because the added edges provide connectivity \emph{across all distance scales}. Concretely, observe that for any partition of the range of the distance function $d$ into intervals, the probability that the (distance of the) other endpoint of an added edge lies in a given interval is independent of the interval. Therefore, by partitioning the range of $d$ into $O(\log n)$ intervals we see that whatever the current vertex $v$ is, there is always $\Omega((\log n)^{-1})$ probability that its long-range edge is to a vertex at a distance at the same ``scale" as the target. Of course, that alone is not enough. In order to shrink the distance to the target by a constant factor, we also need the long-range edge to have reasonable probability to go ``in the right direction", something which is effortlessly true in regular lattices for any finite dimension. In subsequent work~\cite{NIPS}, aiming to provide rigorous results for graphs beyond lattices, Kleinberg showed that the geometric conditions needed for RBA to achieve navigability are satisfied by the geometries induced by \emph{set-systems} satisfying certain conditions when the distance between two vertices is the  size of the smallest set (homophily) containing both (see Definition~\ref{ss:def} in Section~\ref{sec:ssarecoherent}).  
 
Another canonical setting for achieving navigability by RBA is when the distance function $d$ is the shortest-path metric of a connected graph $G_{0}(V, E_{0})$ with large diameter $\Theta(\mathrm{poly}(n))$. In that setting, if $E_{d}$ is the random set of edges added through RBA, the question is whether the (random) graph $G(V,E_{0}\cup E_{d})$ is \red{$d$-navigable}. Works of Slivkins~\cite{Slivkins05} and Fraigniaud et al.~\cite{Lower} have shown the existence of a threshold, below which navigability is attainable and above which (in the worst case) it is not attainable, in terms of the \emph{doubling dimension} of the shortest path metric of $G_0$. Roughly speaking, the doubling dimension corresponds to the logarithm of the possible directions that one might need to search, and the threshold occurs when it crosses $\Theta(\log \log n)$. Thus, we see that even when $d$ is a (shortest path) metric, very significant additional constraints on $d$ need to be imposed. 

The remarkable success of RBA in conferring navigability rests crucially on its  \emph{perfect adaptation} to the underlying geometry. This adaptation, though, not only requires perfect independence and identical behavior of all vertices, but also a very specific, indeed unique, functional form for the probability distribution of edge formation. This exact fine tuning renders RBA unnatural greatly weakening its plausibility. \blue{Our goal in this paper is to demonstrate that navigability is in fact a robust property of networks that emerges from very basic considerations without adaptation, coordination, or even independence assumptions.}   

\section{Our Contribution}

\blue{As mentioned, at the foundation of navigability lies shared knowledge in the form of geometry.} Our starting premise is that geometry imposes \emph{global} constraints on the set of feasible networks. Most obviously, in a physical network where edges (wire, roads) correspond to a resource (copper, concrete) there is typically an upper bound on how much can be invested to create the network. \blue{More generally, cost may represent a number of different notions that distinguish between edges.} We formalize this intuition by (i) allowing different edges to have arbitrary costs, i.e., without imposing any constraints on the cost structure, and (ii) taking as input an upper bound on the \emph{total cost} of feasible graphs, i.e., a budget. We remain fully agnostic in all other respects, i.e., we study the \emph{uniform} measure on all graphs satisfying the budget constraint. \blue{So, for example, if all edges have unit cost we recover the classic \ER\ $G(n,m)$ random graphs (except now $m$ is a random variable, sharply concentrated just below the budget.)} 
 
As one can imagine, the set of all graphs feasible within a given budget may contain wildly different elements. Our capacity to study the uniform measure on such graphs comes from a very recent general theorem we developed in~\cite{product}, of which this work is the first application. At a high level, the main theorem of~\cite{product} asserts that if a subset $S$ of the set of all undirected simple graphs $\mathcal{G}_{n}$ on $n$ vertices is sufficiently symmetric, then the uniform measure on $S$ can be well-approximated by a product measure on the edges, i.e., a measure where each edge is included independently with different edges potentially having different probabilities. Formally, a product measure on $\mathcal{G}_{n}$ is specified succinctly by a symmetric matrix $\mathbf{Q} \in [0,1]^{n\times n}$ of probabilities where $Q_{ii}=0$ for $i\in[n]$. We denote by $G(n,\mathbf{Q})$ the measure in which possible edge $\{i,j\}$ is included independently with probability $Q_{ij}=Q_{ji}$. The main result of~\cite{product} then allows one to approximate the uniform measure by a product measure in the following very strong sense.

\begin{sandwichability}
The uniform measure $U(S)$ over an arbitrary set of graphs $S \subseteq \mathcal{G}_{n}$ is \emph{$(\epsilon,\delta)$-\sand}\ if there exists a $n\times n$ symmetric matrix $\mathbf{Q}$ such that the two distributions $G^{\pm} \sim G(n,(1\pm\epsilon)\mathbf{Q})$, and the distribution $G \sim U(S)$ can be coupled so  that $G^{-} \subseteq G \subseteq G^{+}$ with probability at least $1-\delta$.
\end{sandwichability}

As discussed above, navigability requires some degree of structure in the underlying geometry. It is from this structure that we will extract the symmetry needed to apply the theorem of~\cite{product} and derive a product form approximation for graphs with a bounded total cost. Armed with such an approximation, establishing navigability becomes dramatically easier, allowing us to demonstrate its robustness and ubiquity. \blue{Roughly speaking, we isolate three ingredients that suffice for navigability on a geometry $(V,d)$:
\begin{itemize}
\item
A \emph{substrate} of connections between nearby points on $V$, allowing the walk to never get stuck. 
\item
Some degree of \emph{coherence} of the distance function $d$.
\item
Sufficient, and sufficiently uniform, edge density across all distance scales. 
\end{itemize} 
The first two ingredients are generalizations of existing work and, as we will see, fully compatible with RBA. The third ingredient is also motivated by the RBA viewpoint, but we will prove that it can be achieved in far more-light handed, and thus natural, manner than RBA. Moreover, in the course of doing so, we will give it a very natural \emph{economical} interpretation as the \emph{cost of indexing}.} 
 
\subsection{The two Basic Requirements and a Unifying Framework for RBA}

\begin{substrate}
A set of edges $E_{0}$ forms a \emph{substrate} for a geometry $(V,d)$, if for every $(s,t) \in V \times V$ with $s\neq t$, there is at least one vertex $v$ such that $\{s,v\} \in E_{0}$ and $d(v,t) \le d(s,t)-1$.
\end{substrate}
The existence of the substrate implies that (very slow) travel between any two vertices is possible, so that a decentralized algorithm never gets trivially stuck. \medskip

\blue{Coherence is a notion that comes with an associated scale factor $\gamma>1$.  Specifically, given a geometry $(V,d)$ we will refer to the vertices whose distance from a given vertex $v \in V$ lie in the interval $(\gamma^{k-1},\gamma^{k}]$ as the vertices in the $k$-th (distance) $\gamma$-scale from $v$. Also, for a fixed $\lambda < 1$ and any target vertex $t\neq v$, we will say that a vertex $u$ is $t$-helpful to $v$ if $d(v,u) \leq \gamma^{k}$ ($u$ is within the same $\gamma$-scale as $t$), and $d(u,t) < \lambda d(v,t)$ (reduces the distance by a constant).} 
 
\blue{\begin{coherence} 
 Let $K = \lceil \log_{\gamma}|V|\rceil$. A geometry $(V,d)$ is $\gamma$-\emph{coherent} if there is $\lambda < 1$ such that for all $v \in V$:\smallskip 

\noindent -- For all $k \in [K]$, the number of vertices in the $k$-th distance scale from $v$ is $P_k(v) = \Theta(\gamma^k)$.\smallskip

\noindent -- For all $t\neq v$, a constant fraction of the vertices whose distance scale from $v$ is no greater than the distance scale of $t$ are $t$-helpful to $v$.\smallskip
\end{coherence}

The two conditions above endow the, otherwise arbitrary, semi-metric $d$ with sufficient regularity and consistency to guide the search. Although our definition of coherence is far more general, in order to convey intuition about the two conditions,  think for a moment of $V$ as a set of points in Euclidean space. The first condition guarantees that there are no ``holes", as the variance in the density of points is bounded in every distance scale. The second condition guarantees that around any vertex $v$ the density of points does not change much depending on the direction (target vertex $t$) and distance scale. Besides those two conditions, we make \emph{no further} assumptions on $d$ and, in particular, we do \emph{not} assume the triangle inequality.} 

Coherent geometries allow us to provide a unified treatment of navigability since they encompass finite-dimensional lattices, hierarchical models, any vertex transitive graph with bounded doubling dimension and, as we prove in Section~\ref{sec:ssarecoherent}, Kleinberg's set systems~\cite{NIPS} (see Definition~\ref{ss:def}).
 
\begin{theorem}\label{coh-set}
Every set system satisfying the conditions of~\cite{NIPS} is a $\gamma$-coherent geometry for some $\gamma>1$.
\end{theorem}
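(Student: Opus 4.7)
The plan is to unpack the definition of Kleinberg's set systems (Definition~\ref{ss:def}, whose parameters I will denote generically by $\alpha, \beta$) and show that the two axioms of $\gamma$-coherence are direct structural consequences of those parameters, for an appropriate choice of $\gamma$ depending only on $\alpha$ and $\beta$. Recall that in a set system the distance $d(v,u)$ is defined as the size of the smallest set (cluster) containing both $v$ and $u$, so both $N_v(\ell)$ and the set of $t$-helpful vertices can be read off directly from the cluster structure. The strategy is therefore: (i) translate Kleinberg's bounds on cluster sizes and cluster overlaps into two-sided growth bounds on $N_v(\ell)$, and (ii) use the same bounds, applied around $t$ rather than around $v$, to count $t$-helpful vertices.

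For the first axiom, I would use Kleinberg's growth condition to obtain inequalities of the form $c_1 \ell \le N_v(\ell) \le c_2 \ell$ for absolute constants $c_1, c_2$ depending on $\alpha, \beta$. Then I would choose $\gamma$ large enough that $c_2/\gamma < c_1$, so that
\[
P_k(v) \;=\; N_v(\gamma^k) - N_v(\gamma^{k-1}) \;\ge\; c_1\gamma^k - c_2\gamma^{k-1} \;=\; \Theta(\gamma^k),
\]
while the matching upper bound $P_k(v) \le N_v(\gamma^k) \le c_2 \gamma^k$ is immediate. This gives $P_k(v) = \Theta(\gamma^k)$ for every vertex $v$ and every scale $k \in [K]$.

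For the second axiom, fix $v$ and a target $t$ with $d(v,t) \in (\gamma^{k-1}, \gamma^k]$, and let $S^*$ be a smallest cluster containing both $v$ and $t$; thus $|S^*| = d(v,t) \le \gamma^k$. Every $u \in S^*$ satisfies $d(v,u) \le |S^*| \le \gamma^k$, so $u$ lies in a scale no larger than $k$. It therefore suffices to show that a constant fraction of the vertices in $S^*$ are at distance less than $\lambda d(v,t)$ from $t$ for some fixed $\lambda < 1$. This is exactly the content of the lower-bound form of Kleinberg's growth condition applied around $t$: choosing $\lambda$ close to $1$ we have $N_t(\lambda |S^*|) \ge c_1 \lambda |S^*|$, while the ``bounded overlap'' axiom of the set system forces all but an $O(\beta)$ fraction of these near-$t$ vertices to actually sit inside $S^*$ (since $S^*$ is already a minimal witness of the $v$-$t$ pair). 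Tuning $\lambda$ so that $c_1 \lambda - O(\beta) > 0$ yields the required constant fraction of $t$-helpful vertices among those in scales $\le k$ from $v$.

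The main obstacle is step (ii): the $t$-helpful vertices must simultaneously be close to $t$ (a statement about clusters around $t$) and within the $k$-th scale from $v$ (a statement about clusters around $v$), and a priori these two conditions involve different minimal clusters. The argument above bridges them by working inside the single cluster $S^*$ and invoking the set system's cross-vertex consistency (the $\beta$-bounded overlap of small clusters), so the whole theorem reduces to verifying that this consistency axiom is strong enough to absorb the loss from passing between neighborhoods of $v$ and $t$. Once $\gamma$ is taken large in terms of $\alpha,\beta$ and $\lambda$ is taken close to $1$, both axioms of coherence hold simultaneously, completing the proof.
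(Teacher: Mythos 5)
Your high-level plan (bound $P_k(v)$ as a difference of ball sizes with $\gamma$ chosen large, and work inside the minimal cluster $S^*$ containing $v$ and $t$ for isotropy) matches the paper's outline, but there are two genuine gaps.

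First, for (H1), you treat the two-sided bound $c_1\ell \le N_v(\ell) \le c_2\ell$ as given by ``Kleinberg's growth condition.'' But (K3) only provides the \emph{upper} bound $N_v(\ell) \le \beta\ell$. There is no axiom giving a lower bound on ball sizes directly, and (K2) by itself only says that a $t$-bound proper subset of size $\ge \min\{\lambda|S|, |S|-1\}$ exists --- which allows shrinking by a single vertex per step, and hence permits arbitrarily long gaps between the sizes of sets containing $v$. The missing ingredient is the paper's Shrinkage proposition: by iterating (K2) from $V$ and stopping just before the size drops below $\lambda|S|$, one shows that for every $S$ and every $t \in S$ there is a $t$-bound $S'$ with $\lambda^2|S| \le |S'| \le \lambda|S|$. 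This, applied repeatedly from $V$, gives a $v$-bound set in every size-scale and hence the lower bound on $B_k(v)$. Without an explicit argument of this type, the lower bound in your first displayed chain is unjustified.

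Second, the isotropy step has a more serious flaw. You lower bound $N_t(\lambda|S^*|)$ and then claim that a ``bounded overlap axiom'' forces all but an $O(\beta)$ fraction of these near-$t$ vertices to lie in $S^*$. No such axiom exists in (K1)--(K3): (K3) bounds $|S_L(v)|$, the union of small sets through a fixed vertex, and says nothing about which vertices near $t$ happen to also lie in $S^*$. The minimality of $S^*$ for the pair $(v,t)$ gives no control here either --- a vertex $w$ can be in a tiny cluster with $t$ and yet lie outside $S^*$. The paper sidesteps this entirely by producing the near-$t$ vertices \emph{already inside} $S_{st}$: Shrinkage applied to $S_{st}$ with anchor $t$ yields a $t$-bound subset $S'\subset S_{st}$ of size in $[\lambda^2|S_{st}|, \lambda|S_{st}|]$, and every vertex of $S'$ is simultaneously in $S_{st}$ (hence close to $v$) and in $S'$ (hence close to $t$). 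The remaining work is to show these vertices are not all concentrated in the lowest few distance scales from $v$, which is done by subtracting the (K3) upper bound at scale $k-2$ from the lower bound $|S'| \ge \lambda^2\gamma^{k-1}$. Relatedly, your quantitative condition ``$c_1\lambda - O(\beta) > 0$'' cannot hold when $\beta > 1$ and $\lambda < 1$; the paper's inequality wins because it compares quantities at scales $k-1$ and $k-2$, so the large choice of $\gamma$ supplies the needed factor ($\lambda^2\gamma > \beta$).
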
 

\begin{theorem}\label{thm:rba} 
Let $(V,d)$ be any $\gamma$-coherent geometry and let $E_{0}$ be any substrate for it. If $E_{d}$ is the (random) set of edges obtained by applying RBA to $(V,d)$, then the graph $G(V,E_{0}\cup E_{d})$ is $d$-navigable w.h.p.\footnote{Throughout the paper, all asymptotics will be with respect to the number of vertices $|V|=n$. Thus, with high probability will always mean ``with probability that tends to 1 as $n \to \infty$."}
\end{theorem}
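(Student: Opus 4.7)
The plan is to mimic the classical Kleinberg phase argument in the abstract setting of a $\gamma$-coherent geometry. I consider the natural greedy decentralized algorithm: from the current vertex $v$, move to the neighbor $u\in E_{0}\cup E_{d}$ minimizing $d(u,t)$. The substrate guarantees that from any $v\neq t$ some $E_{0}$-neighbor of $v$ has distance at most $d(v,t)-1$ to $t$, so $d(\cdot,t)$ strictly decreases at every step and no vertex is ever visited twice; in particular, the RBA edges encountered along the walk are mutually independent. I partition the execution into phases indexed by the scale of the current vertex: phase $k$ consists of the steps during which $d(v,t)\in (\gamma^{k-1},\gamma^{k}]$, giving at most $K=\lceil \log_{\gamma} n \rceil =O(\log n)$ phases.

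The heart of the argument is to show that at every vertex $v$ in phase $k$, the RBA edge out of $v$ lands on a $t$-helpful vertex with probability $p=\Omega(1/\log n)$. I first compute the normalization of RBA at $v$. If $u$ lies at scale $j$ from $v$, so that $d(v,u)\in(\gamma^{j-1},\gamma^{j}]$, then the ball of radius $d(v,u)$ around $u$ is sandwiched between $\sum_{i\le j-1} P_{i}(u)$ and $\sum_{i\le j} P_{i}(u)$, both of which are $\Theta(\gamma^{j})$ by the first coherence condition; hence $N_{u}(d(v,u))=\Theta(\gamma^{j})$ and $u$ carries unnormalized RBA weight $\Theta(\gamma^{-j})$. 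Since there are $P_{j}(v)=\Theta(\gamma^{j})$ vertices at scale $j$ from $v$, each scale contributes $\Theta(1)$ and the total normalization is $Z(v)=\Theta(\log n)$. The second coherence condition then asserts that at least a constant fraction of the $\Theta(\gamma^{k})$ vertices at scales $\le k$ are $t$-helpful. Writing $H_{j}$ for the number of $t$-helpful vertices at scale $j$, the pre-normalization helpful mass $\sum_{j\le k} H_{j}\gamma^{-j}$ is minimized by concentrating helpful vertices at the largest scale available, but the cardinality cap $H_{j}\le P_{j}(v)=\Theta(\gamma^{j})$ bounds how much can be hidden there, forcing this mass to be $\Omega(1)$ in every valid configuration. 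Dividing by $Z(v)$ yields $p=\Omega(1/\log n)$.

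The remainder of the argument is routine. A $t$-helpful jump shrinks $d(v,t)$ by a factor of $\lambda<1$, so $\lceil \log_{1/\lambda}\gamma\rceil=O(1)$ helpful jumps suffice to exit phase $k$. Because successive RBA edges encountered during the walk are independent, the wait before the next helpful edge is dominated by a geometric random variable of mean $1/p=O(\log n)$; the geometric tail bound gives that each phase ends within $O(\log^{2} n)$ steps with probability $1-n^{-c}$ for any desired constant $c$. Multiplying by the $O(\log n)$ phases yields $O(\log^{3}n)$ total steps for any fixed source--target pair, and a union bound over all $O(n^{2})$ pairs establishes $d$-navigability with high probability.

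The main obstacle is the helpful-edge calculation. Coherence only constrains the \emph{total} number of helpful vertices in scales $\le k$ and says nothing about their distribution across scales, whereas the RBA weights differ by exponential factors $\gamma^{-j}$ between scales. The key observation is that the cardinality cap $H_{j}\le P_{j}(v)=\Theta(\gamma^{j})$ prevents all helpful mass from being hidden at large scales, so the helpful mass is $\Omega(1)$ no matter how adversarially the helpful vertices are arranged.
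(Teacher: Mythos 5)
Your proof is correct and follows essentially the same route as the paper: bound the RBA normalization $Z=\Theta(\log n)$ via (H1), use (H2) to show each vertex along the walk has an RBA edge to a $t$-helpful vertex with probability $\Omega(1/\log n)$, conclude that a helpful jump occurs within $O(\log^2 n)$ steps w.h.p., and union bound over all $O(n^2)$ pairs (the paper phrases this via its "reducibility" framework rather than your phases, but they are the same argument). One small remark: the "cardinality cap" worry in your helpful-mass calculation is a detour — since every $t$-helpful vertex lies at scale $j\le k_{vt}$ and hence carries unnormalized weight $\Omega(\gamma^{-k_{vt}})$, the bound $|D_\lambda(v,t)|\ge\phi\gamma^{k_{vt}}$ from (H2) gives $\Omega(1)$ helpful mass directly with no optimization over the distribution of scales needed, which is exactly the one-line computation the paper performs.
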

  
Theorem~\ref{thm:rba} subsumes and unifies a number of previous positive results on RBA-induced navigability. Our main contribution, though, lies in showing that given a substrate and coherence, navigability can emerge without coordination from the interplay of cost and geometry.

\subsection{Navigability from Organic Growth}

As mentioned earlier, the success of RBA stems from the fact that the edge-creation mechanism is \emph{perfectly} adapted to the underlying geometry so as to induce navigability. In contrast, we will not even specify an edge-creation mechanism, but rather focus only on the set of graphs feasible with a given budget. Our requirement is merely that the cost function is \emph{informed} by the geometry. 
\begin{costconsist}  
\blue{Given a $\gamma$-coherent geometry $(V,d)$, a cost function $c: V \times V \to \reals$ is $\gamma$-\emph{consistent} if $c$ takes the same value $c_{k}$ for every edge $\{u,v\}$ such that $d(u,v) \in (\gamma^{k-1},\gamma^{k}]$.}
\end{costconsist} 
 
In particular, note that  we make no requirements of the values $\{c_k\}$, not even a rudimentary one, such as being increasing in $k$. All that $\gamma$-consistency entails is that the partition of edges according to cost is a coarsening of the partition of the edges by $\gamma$-scale. In fact, even this requirement can be weakened significantly, as long as some correlation between the two partitions remains, but it is technically much simpler to assume $\gamma$-consistency as it simplifies the exposition greatly. One can think of consistency as limited sensitivity with respect to distance. As an example, it means that making friends with the people next door might be more likely than making friends with other people on the same floor, and that making friends with people on the same floor is more likely than making friends with people in a different floor, but it does not really matter which floor.  
\red{\begin{cg}
We say that $\Gamma = \Gamma(V,d,c)$ is a \emph{coherent cost-geometry} if there exists $\gamma > 1$ such that $(V,d)$ is a $\gamma$-coherent geometry and $c$ is $\gamma$-consistent cost function.
\end{cg}}

\begin{rgfc}
Given a coherent cost-geometry $\Gamma(V,d,c)$ and a real number $B \ge 0$, let $\red{G_{\Gamma}(B)} =\{E\subseteq V\times V: \frac{1}{n}\sum_{e\in E} \red{c(e)} \leq B \}$, i.e.,  $\red{G_{\Gamma}(B)}$ is the set of all graphs (edge sets) on $V$ with total cost at most $Bn$. A uniformly random element of $\red{G_{\Gamma}(B)}$ will be denoted as \red{$E_{\Gamma}= E_{\Gamma}(B)$}. 
\end{rgfc}

Applying the main theorem of~\cite{product}, in Section~\ref{sec:product} we will prove that random graphs of bounded cost (on a consistent cost-geometry) have a product measure approximation, in the following sense. 
\begin{theorem}\label{partakariola}
Given a coherent cost-geometry $\Gamma$, there exist a a unique function $\lambda(B)\geq 0$ and  constant $B_{0}(\Gamma)>0$  such that for every $B\geq \red{B_{0}(\Gamma)}$ the uniform measure on $G_{\Gamma}(B)$ is $(\delta, \epsilon)$-sandwichable by the product measure 
in which the probability of every edge with cost $c_k$ is
\begin{equation}\label{ohyeah}
\frac{1}{1+\exp(\lambda(B) c_{k})} \enspace ,
\end{equation}
and $(\delta, \epsilon) = \red{\left(\sqrt{\frac{24}{\log \red{|V|}}}, \ 2 \red{|V|}^{- 5K}\right)}$. The number $\lambda(B)>0$ can be explicitly defined in terms of $\{c_{k}\}$.
\end{theorem}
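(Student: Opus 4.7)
My strategy is to apply the sandwichability theorem of~\cite{product} to $S = G_\Gamma(B)$, with the approximating product measure identified by a maximum-entropy argument. Concretely, the plan is to (i) identify the natural candidate $\mathbf{Q}(B)$, which maximizes entropy over $\mathcal{G}_n$ subject to having expected total cost $Bn$, (ii) verify that $G_\Gamma(B)$ satisfies the symmetry and richness hypotheses of~\cite{product}, and (iii) extract the explicit error bounds $(\delta,\epsilon)$ from its general estimates. A standard Lagrangian computation shows that the maximum-entropy product measure is independent across edges with $Q_{uv} = 1/(1+\exp(\lambda\, c(u,v)))$, and $\gamma$-consistency makes this depend on $\{u,v\}$ only through its $\gamma$-scale $k$, giving exactly the form in~\eqref{ohyeah}.

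\textbf{Existence and uniqueness of $\lambda(B)$.} Let $m_k$ denote the number of pairs in the $k$-th $\gamma$-scale; coherence yields $m_k = \Theta(n\gamma^k)$. Writing $F(\lambda) = \sum_k m_k c_k/(1+\exp(\lambda c_k))$ for the expected total cost under $\mathbf{Q}(\lambda)$, I would check that $F$ is continuous and strictly monotone on $[0,\infty)$ (each summand with $c_k>0$ is), so for every target value $Bn$ in its range there is a unique $\lambda(B) \ge 0$ with $F(\lambda(B)) = Bn$. The threshold $B_0(\Gamma)$ will just be taken large enough that $\lambda(B)$ is well-defined and the induced probabilities $p_k = 1/(1+\exp(\lambda(B) c_k))$ stay bounded away from $0$ and $1$ by a margin sufficient for the sandwich estimates.

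\textbf{Verification of the hypotheses.} The set $G_\Gamma(B)$ is cut out by the single linear inequality $\sum_e c(e)\,\mathbf{1}[e\in E] \le Bn$, and $\gamma$-consistency makes it invariant under any permutation of edges that preserves the $\gamma$-scale partition. This is precisely the Edge Block Symmetry needed in~\cite{product}, with $K = \lceil \log_\gamma n\rceil$ blocks, and coherence supplies the Uniform Richness via $m_k = \Theta(n\gamma^k)$. Since the constraint has only one active dimension and the cost vector takes only $K$ distinct values, the induced condition number is bounded. With these ingredients, the main theorem of~\cite{product} produces a sandwiching matrix, and the Entropic Optimizer step identifies this matrix with $\mathbf{Q}(\lambda(B))$.

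\textbf{Main obstacle.} The principal technical challenge will be quantitative: matching the abstract estimates from~\cite{product} — which are expressed in terms of thickness, condition number, and the number of blocks — to the sharp targets $\delta = \sqrt{24/\log n}$ and $\epsilon = 2n^{-5K}$. I expect this to reduce to (a) the lower bound $\min_{k\ge 1} m_k = \Omega(n)$ afforded by coherence, which controls thickness, and (b) showing that for $B \ge B_0(\Gamma)$ the probabilities $p_k$ lie in a range compatible with the concentration of the total cost around $Bn$. Once symmetry and richness are in place, the remainder is careful bookkeeping; the only real insight beyond~\cite{product} is the Entropic Optimizer step that pins down the explicit logistic form~\eqref{ohyeah}.
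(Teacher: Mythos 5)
Your proposal tracks the paper's proof essentially step for step: apply the sandwich theorem of~\cite{product} to the convex, $\mathcal{P}$-symmetric set $G_{\Gamma}(B)$, derive the logistic form~\eqref{ohyeah} and the unique $\lambda(B)$ from the Lagrangian/KKT conditions of the edge-profile entropy maximization (the paper's Entropic Optimizer), with uniqueness and invertibility following from strict monotonicity of the cost map $\lambda \mapsto \sum_k c_k p_k/(1+\mathrm{e}^{\lambda c_k})$, and then choose $B_0(\Gamma)$ as the threshold below which the thickness $\mu$ would make the $(\epsilon,\delta)$ bounds fail. Your remark in the verification paragraph that the condition number is ``bounded'' purely from the one-dimensional linear constraint is imprecise --- in the paper $\tau = 5K\log n/\mu(S)$ and $\mu$ depends on $\lambda(B)$ through $m_k^*$, which is exactly why the restriction $B \ge B_0$ is needed --- but you correctly identify this as the real content in your later point (b), so the route is the paper's own.
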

The regularizer $\lambda=\lambda(B)$ in Theorem~\ref{partakariola} corresponds to the derivative of entropy with respect to the budget $B$ (energy), i.e., is an inverse temperature, and depends on $B$ in a smooth one-to-one manner. 

\blue{Theorem~\ref{partakariola} will give us a great amount of access to the uniform measure on $G_{\Gamma}(B)$. In particular, the upper approximation \mg{$G(|V|,(1+\epsilon)\mathbf{Q})$} will allow us to bound the total number of edges present in a typical graph, establishing sparsity for all sufficiently small budgets. On the other hand, the lower approximation \mg{$G(|V|,(1-\epsilon)\mathbf{Q})$}  will allow us to establish a lower bound on the number of edges incident to each vertex of each distance scale. Combined with the spatial uniformity afforded by independence, this will allow us to prove that navigability emerges as soon as the total number of edges within each scale is large enough, establishing navigability for all sufficiently large budgets.}

\setlist[description]{font=\sffamily\normalfont}

\begin{theorem}\label{thm-general}
For every coherent \red{cost-geometry} $\Gamma(V,d,c)$, where $|V|=n$, there exist numbers $B^{\pm}$ such that if $E_{\Gamma}$ is a uniformly random element of $G_{\Gamma}(B)$ then:
\begin{description}
\item[--]
For all $B\leq B^{\red{+}}$, w.h.p.\ $|E_{\Gamma}|=O(n\cdot \mathrm{poly}(\log n))$. \hfill (Sparsity) 
\item[--]
For all $B\geq B^{\red{-}}$, for any substrate $E_{0}$, w.h.p.\ the graph $G(V,E_{0}\cup E_{\Gamma})$ is $d$-navigable. \hfill (Navigability)
\end{description} 
\end{theorem}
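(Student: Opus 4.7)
Both parts reduce to analyses of the product measure furnished by Theorem~\ref{partakariola}: let $p_k := 1/(1+\exp(\lambda(B)\,c_k))$, and let $G^{\pm} \sim G(n,(1\pm\epsilon)\mathbf{Q})$ be the sandwiching product measures, where $Q_{uv}=p_{k(u,v)}$ and $k(u,v)$ is the $\gamma$-scale of $\{u,v\}$. Since $\lambda(B)$ is a smooth, strictly monotone function of $B$, I would specify $B^{+}$ and $B^{-}$ indirectly, by prescribing the associated inverse temperatures $\lambda^{+}:=\lambda(B^{+})$ and $\lambda^{-}:=\lambda(B^{-})$. The sandwich failure probability $2|V|^{-5K}$ is polynomially small and will be absorbed into all union bounds below.

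\textbf{Sparsity.} By $\gamma$-coherence each vertex has $\Theta(\gamma^k)$ candidate neighbors at scale $k$, so there are $\Theta(n\gamma^k)$ scale-$k$ pairs in total. In $G^{+}$ each such pair is present independently with probability $(1+\epsilon)p_k$, giving $\E|G^{+}| = O\bigl(n \sum_{k=1}^{K} \gamma^{k}/(1+\exp(\lambda^{+}c_k))\bigr)$. Choosing $\lambda^{+}$ large enough (hence $B^{+}$ small enough) that this sum is $O(\mathrm{poly}\log n)$ is possible for any nonzero cost vector, because $p_k$ decays like $\exp(-\lambda^{+} c_k)$ while $\gamma^k \le n$. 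A Chernoff bound on the independent edges of $G^{+}$, combined with the coupling $E_{\Gamma}\subseteq G^{+}$, then yields $|E_{\Gamma}| = O(n\,\mathrm{poly}\log n)$ w.h.p.

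\textbf{Navigability.} Pick $\lambda^{-}$ small enough (hence $B^{-}$ large enough) that $p_k\,\gamma^k = \Omega(\log n)$ for every $k\in[K]$; taking $\lambda^{-}$ sufficiently close to $0$ forces every $p_k \to 1/2$, so this is always attainable. For each ordered pair $(v,t)\in V\times V$ with $v\ne t$, let $k^{\star}=k(v,t)$. By the second clause of $\gamma$-coherence a constant fraction of the $\Theta(\gamma^{k^{\star}})$ scale-$k^{\star}$ candidates at $v$ are $t$-helpful, and in $G^{-}$ each such candidate edge is realized independently with probability $(1-\epsilon)p_{k^{\star}}$. Chernoff then gives the existence of at least one realized $t$-helpful edge at $v$ with probability $1-n^{-\omega(1)}$, and a union bound over the $n^{2}$ pairs $(v,t)$ shows that w.h.p.\ every vertex simultaneously has a $t$-helpful long-range edge for every target $t$. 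Greedy routing, falling back on the substrate $E_{0}$ at any step where no long-range edge helps, then contracts $d(\cdot,t)$ by the coherence contraction factor at every long-range step, reaching $t$ in $O(\log |V|) = O(\mathrm{poly}\log n)$ steps.

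\textbf{Main obstacle.} The subtle part is that the cost vector $\{c_k\}$ is arbitrary, so the $p_k$ may differ by huge multiplicative factors across scales and no monotonicity of $p_k$ in $k$ can be assumed. Consequently the single scalar $\lambda^{-}$ must simultaneously guarantee $p_k\,\gamma^k = \Omega(\log n)$ at every scale; the way out is that driving $\lambda^{-}$ to $0$ saturates every $p_k$ to $1/2$ uniformly, regardless of $\{c_k\}$. Verifying that this choice is consistent with $B^{-}\ge B_{0}(\Gamma)$ in Theorem~\ref{partakariola}, and that the sandwich slack $\epsilon = O(1/\sqrt{\log n})$ does not erode the $\Omega(\log n)$ margin used in the Chernoff step, amounts to routine bookkeeping.
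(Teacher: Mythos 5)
Your sparsity argument and your overall reliance on the sandwich of Theorem~\ref{partakariola} match the paper. The gap is in the navigability half. You require $p_k\gamma^k=\Omega(\log n)$ for \emph{every} scale $k\in[K]$ and claim this is ``always attainable'' by driving $\lambda^{-}\to 0$. It is not: since $\lambda(B)\ge 0$, every edge probability satisfies $p_k\le 1/2$, so at scales with $\gamma^k=o(\log n)$ the product $p_k\gamma^k$ is $o(\log n)$ no matter how large the budget is. Consequently your intermediate claim --- that w.h.p.\ \emph{every} vertex has a $t$-helpful long-range edge for \emph{every} target $t$ --- is false: for a pair $(v,t)$ at constant distance there are only $O(1)$ helpful candidates, each present with probability at most $1/2$, so a constant fraction of such pairs have no helpful edge, and the union bound over all $n^2$ pairs cannot close. (A secondary imprecision: the $t$-helpful candidates for $(v,t)$ live at all scales $\le k^{\star}$ from $v$, not just at scale $k^{\star}$, and with an arbitrary, non-monotone cost vector their probabilities need not equal $p_{k^{\star}}$.)

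The paper avoids both problems by weakening the per-vertex requirement to a per-path one. Uniform richness only demands edge probability $\Omega(1/(\gamma^k\,\mathrm{poly}\log n))$, and only for scales above distance $\mathrm{poly}\log n$; Lemma~\ref{lem:rich} then walks $C\log^{\theta+1}n$ steps along the substrate's local $(s,t)$-path, accumulating $\Theta(\log^{\theta+1}n\cdot\gamma^k)$ candidate helpful edges whose union has probability $1-n^{-\Omega(1)}$, which survives the union bound; distances below $\mathrm{poly}\log n$ are covered by the substrate alone. Your argument can be repaired along the same lines --- restrict the condition $p_k\gamma^k=\Omega(\log n)$ to scales with $\gamma^k\ge C\log n$ and finish the last $O(\log n)$ of distance on the substrate --- but note that even then you are demanding $\Omega(\log n)$ expected helpful edges per vertex per scale rather than the paper's $\Omega(1/\mathrm{poly}\log n)$. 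That stronger demand is harmless for Theorem~\ref{thm-general} as stated (no relation between $B^{-}$ and $B^{+}$ is asserted), but it inflates $B^{-}$ by a $\mathrm{poly}\log n$ factor, which would eat into the window $[B^{-},B^{+}]$ needed for Theorem~\ref{thm-indexing}.
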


Note that Theorem~\ref{thm-general} shows that navigability arises eventually, i.e., for all $B\geq B^{-}$, without \emph{any} further assumptions on the cost function or geometry. 
The caveat, if we think of $B$ as increasing from 0, is that by the time there are enough edges across all distance scales, i.e., $B \ge B^{-}$, the total number of edges may be much greater than linear. This is because for an arbitrary cost structure $\{c_k\}$, by the time the  ``slowest growing" distance scale has the required number of edges, the other scales may be replete with edges, possibly many more than the required $\Omega(n/\mathrm{poly}\log(n))$. This is reflected in the ordering between $B^{-}$ and $B^{+}$ that determines whether the sparsity and navigability regimes are overlapping. In particular, we would like $B^{-} \leq B^{+}$ and, ideally, the ratio $R = B^{+}/B^{-}>0$ to be large. Whether this is the case or not depends precisely on the degree of adaptation of the cost-structure $\{c_k\}$ to the geometry as we discuss next.


 
\subsection{Navigability as a Reflection of the Cost of  Indexing}

Recall that for every vertex $v$ in a $\gamma$-coherent geometry and for every distance scale $k \in [K]$, the number of vertices whose distance from $v$ is in the $k$-th $\gamma$-distance scale is $P_k(v) =\Theta(\gamma^{k})$. At the same time, \eqref{ohyeah} asserts that the probability of each edge is exponentially small in its cost. Thus, reconciling sparsity with navigability boils down to balancing these two factors. We will exhibit a class of cost functions that (i) have an intuitive interpretation as the \emph{cost of indexing}, (ii) achieve a ratio $R =  B^{+}/B^{-}>0 $ that \emph{grows} with $n$, i.e., a very wide range of budgets for which we have both navigability and sparsity, and (iii) recover RBA as a special case corresponding to a particular budget choice. 
 
Consider a vertex $v$ that needs to forward a message to a neighbor $u$ at the $k$-th distance scale. To do so, $v$ needs to distinguish $u$ among all other $P_k(v)$ vertices at the $k$-th distance scale, i.e., $v$ needs to be able to \emph{index} into that scale. To do so, it is natural to assert that $v$ must incur a cost of $\Theta(\log_2 P_k(v)) = \Theta(k)$ (due to coherence) bits to store the unique ID of $u$ among the other members of its equivalence class (in the eyes of $v$). Motivated by this we consider cost functions where for some $\alpha > 0$,
\[
c^*_k = \frac{1}{\alpha}k \enspace .
\]


\begin{theorem}\label{thm-indexing}
For any coherent cost-geometry $\Gamma(V, d, c^{*})$, where $|V|=n$, there exist $B^{-} \le B^{+}$ such that: 
\begin{enumerate}
\item\label{window} 
$B^{+} /B^{-} = \omega(\mathrm{poly}\log n)$.
\item 
For all $B \in [B^{-}, \ B^{+}]$, w.h.p.:
\begin{itemize}  
\item
$|E_{\Gamma}(B)|= O(n \;\mathrm{poly}\log n))$.
\item
The graph $G\left(V, E_{0} \cup E_{\Gamma}(B)\right)$ is $d$-navigable.
\end{itemize}
\item
There exists $B_{a} \in [B^{-}, \ B^{+}]$ such that in the approximation of $E_{\Gamma}$ by \mg{$G(|V|,\mathbf{Q})$}, for every $\{u,v\} \in E$, $\mathbf{Q}_{ij}^{*} = \Theta(N_{u}(d(u,v))^{-1})$, i.e, Rank Based Augmentation is approximately recovered.
\end{enumerate}
\end{theorem}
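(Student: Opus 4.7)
The plan is to specialize Theorem~\ref{thm-general} to the indexing cost $c^*_k = k/\alpha$. By Theorem~\ref{partakariola}, the product measure approximation assigns to every potential edge at the $k$-th $\gamma$-scale the probability $Q_k = 1/(1+\exp(\lambda(B) k/\alpha))$. I reparametrize by $\beta = \beta(B) := \lambda(B)/(\alpha \log \gamma)$, a monotone decreasing function of $B$, so that $Q_k = \Theta(\gamma^{-\beta k})$ once $\beta k \gtrsim 1$. By coherence, each vertex has $P_k(v) = \Theta(\gamma^k)$ candidate neighbors at scale $k$, so its expected number of long-range edges at scale $k$ equals $\Theta(\gamma^{k(1-\beta)})$. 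The whole argument turns on how this quantity behaves as $\beta$ sweeps through the critical value $1$, where it equals $\Theta(1)$ at every scale simultaneously.

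Next I locate the sparsity and navigability regimes in terms of $\beta$. Using the upper sandwich $G(n,(1+\epsilon)\mathbf{Q})$ and a Chernoff bound, $|E_\Gamma(B)| = O(n\,\mathrm{poly}\log n)$ w.h.p.\ provided $\beta \geq \beta_{\mathrm{sp}} := 1 - c_{\mathrm{sp}}\log\log n/\log n$ for a suitable constant $c_{\mathrm{sp}}$, since then $\sum_k \gamma^{k(1-\beta)} = \Theta(\log^{c_{\mathrm{sp}}+1} n)$. Using the lower sandwich $G(n,(1-\epsilon)\mathbf{Q})$ together with coherence and the greedy-routing argument underlying Theorem~\ref{thm-general}, navigability holds as long as every scale has expected edge count $\Omega(\mathrm{poly}\log^{-1} n)$; the worst case is $k=K$, forcing $n^{1-\beta} \geq \mathrm{poly}\log^{-1} n$, i.e., $\beta \leq \beta_{\mathrm{nv}} := 1 + c_{\mathrm{nv}}\log\log n/\log n$ for a suitable constant $c_{\mathrm{nv}}$. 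Setting $B^{+} := B(\beta_{\mathrm{sp}})$ and $B^{-} := B(\beta_{\mathrm{nv}})$ (recall $B$ is decreasing in $\beta$), both conclusions hold throughout $[B^{-}, B^{+}]$, and $B_a := B(1)$ lies strictly inside this interval.

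To estimate $B^{+}/B^{-}$ I compute the mean cost per vertex $B(\beta) = \Theta\bigl(\alpha^{-1}\sum_{k=1}^{K} k\, \gamma^{k(1-\beta)}\bigr)$. At $\beta = \beta_{\mathrm{nv}}$ the sum is dominated by $k \sim 1/|(\beta-1)\log\gamma| = \Theta(\log n/\log\log n)$ and evaluates to $\Theta((\log n/\log\log n)^2)$; at $\beta = \beta_{\mathrm{sp}}$ it is dominated by $k = K$ and evaluates to $\Theta(\log^{c_{\mathrm{sp}}+2} n/\log\log n)$. Dividing, $B^{+}/B^{-} = \Theta(\log^{c_{\mathrm{sp}}} n \cdot \log\log n)$; since $c_{\mathrm{sp}}$ may be chosen arbitrarily large, this is $\omega(\mathrm{poly}\log n)$, giving part~1. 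Part~3 is immediate by plugging $\beta = 1$: $Q_k = 1/(1+\gamma^k) = \Theta(\gamma^{-k}) = \Theta(N_u(d(u,v))^{-1})$, the last identity holding because $\{u,v\}$ at the $k$-th scale satisfies $N_u(d(u,v)) = \Theta(\gamma^k)$ by coherence, so RBA is recovered up to constants at $B_a = B(1) \in [B^{-}, B^{+}]$.

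The main obstacle is the bookkeeping around the transition at $\beta = 1$, where the sum $\sum_k k\, \gamma^{k(1-\beta)}$ crosses from geometric growth in $k$ to geometric decay. One must verify that the implicit constants in $Q_k \asymp \gamma^{-\beta k}$ (which is only accurate once $\beta k \gg 1$) do not corrupt the endpoint evaluations of $B(\beta)$, and that the $(\delta,\epsilon)$-sandwichability errors of Theorem~\ref{partakariola} remain $o(1)$ uniformly throughout $[B^{-}, B^{+}]$. The navigability half additionally requires the lower-sandwich fluctuations at each vertex and each scale to be small enough that the greedy walk makes progress at every step with probability $\Omega(\mathrm{poly}\log^{-1} n)$; this is precisely where the spatial uniformity afforded by the product form is essential.
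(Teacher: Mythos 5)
Your proposal follows essentially the same route as the paper: specialize the product-measure probabilities $Q_k = (1+\exp(\lambda(B)c^*_k))^{-1}$ to the indexing cost, identify the scale-invariant point (your $\beta=1$, the paper's $\lambda(B_a)=\alpha$) where $Q_k \asymp \gamma^{-k} \asymp N_u(d(u,v))^{-1}$, and bracket it by the thresholds from the richness and sparsity propositions to get the window $[B^-,B^+]$, estimating its endpoints via the cost function $g$. Your reparametrization in terms of $\beta=\lambda/(\alpha\log\gamma)$ and your explicit evaluation $B^-=\Theta((\log n/\log\log n)^2)$ are minor cosmetic refinements over the paper (which simply bounds $B^-\le B_a=\Theta(\log^2 n)$), but the decomposition, the key estimates, and the parameter-dependent exponent in $B^+/B^-$ are the same.
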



Part~\ref{window} of Theorem~\ref{thm-indexing} is equivalent to a scaling window of $\Theta(\frac{\log\log n}{\log n})$ for the exponent $\lambda$, within which navigability holds with poly-logarithmic average degree. This corroborates Kleinberg's work that gave a unique exponent of $\beta=-1$ in the context of RBA for the scaling~\eqref{eq:rba} of probability. Nevertheless, under our framework this vanishing window for the highly sensitive paramater $\lambda$ produces a \emph{diverging} range for values of $B$, explaining the purported fragility of RBA to looking at perturbations in the \emph{wrong scale}. In fact, we can use this feature of our model to provide the first theoretical explanation for the discrepancy between theoretical results and empirical evidence~\cite{Adamic,Liben,Clauset03} showing that real networks exhibit an exponent $\hat{\beta} \approx 0.8 < 1$. In our setting, exponents smaller than 1 correspond to \emph{higher} average degree and thus we can attribute this discrepancy to finite size effects (finite $n$) and the densification~\cite{leskovec} of networks.

\section{Deriving a Product Measure Approximation: Proof of Theorem~\ref{partakariola}}\label{sec:product}

We start with some definitions that will allow us to state the main theorem of~\cite{product}. A set of graphs $S\subseteq \mathcal{G}_{n}$  is symmetric with respect to a partition $\mathcal{P}$ of the set of all possible $\binom{n}{2}$ edges, if the characteristic function of $S$ depends only on the number of edges from each part of $\mathcal{P}$ but not on which edges.

\begin{edgeprofile} Given a partition $\mathcal{P}=(\mathcal{P}_{1},\ldots,\mathcal{P}_{K})$ of the set of all possible $\binom{n}{2}$ edges, for a set of edges $E \in \mathcal{G}_{n}$ and for each $k \in [K]$, let $m_{k}(E)$ denote the number of edges in $E$ from $\mathcal{P}_{k}$. The \emph{edge profile} of $E$  is $\mathbf{m}(E) := (m_{1}(E),\ldots,m_{K}(E))$. 
\end{edgeprofile}

\noindent We denote the image of a symmetric set $S$ under the edge-profile as $\mathbf{m}(S)$. As before let $P_{k}=|\mathcal{P}_{k}|=\frac{1}{2}\sum_{u\in V}P_{k}(u)$ be the total number of edges in \part\ $k$ of partition $\mathcal{P}$.
\vspace{-0.4cm}

\begin{epentropy}
Given an edge profile $\mathbf{v} = (v_1,\ldots,v_k)$ the entropy of $\mathbf{v}$ is
$\displaystyle{\text{\ENT}(\mathbf{v}) = \sum_{k=1}^{K}\log \binom{P_{k}}{v_{k}}}$.
\end{epentropy}

\noindent The edge-profile entropy is used to express the number of graphs  with a particular edge profile $\mathbf{v}$ as $\exp(\text{\ENT}(\mathbf{v}))$. Given any symmetric set $S\subseteq \mathcal{G}_{n}$, the probability of observing an edge profile $\mathbf{v}$ when sampling an element uniformly at random from $S$ is then given by    $
\mathbf{P}_{S}(\mathbf{v}) =  \frac{1}{|S|}  e^{\text{\ENT}(\mathbf{v})}$. Thus, in order to analyze the distribution of a random edge-profile, and consequently of a random element of $G_{c}(n,B)$, we are going to exploit analytic properties of the entropy on the set of feasible edge profiles $\mathbf{m}(S)$.

\begin{convexity}\label{def-convex}
Let $\mathrm{Conv}(A)$ denote the convex hull of a set $A$. Say that a $\mathcal{P}$-symmetric set $S\subseteq \mathcal{G}_{n}$ is convex iff the convex hull of $\mathbf{m}(S)$ contains no new integer points, i.e., if $\mathrm{Conv}(\mathbf{m}(S))\cap \nats^{k} = \mathbf{m}(S)$.
\end{convexity}


\begin{optimizer} Given a symmetric set $S$, 
let $\mathbf{m}^{*}=\mathbf{m}^{*}(S)\in \R^{k}$ be the unique solution to
\begin{equation}
\label{eq:optimization}
\max_{\mathbf{v} \in \mathrm{Conv}(\mathbf{m}(S))} -\sum_{k=1}^{K}\left[v_{k}\log\left(\frac{v_{k}}{P_{k}} \right) +(P_{k}-v_{k})\log\left(\frac{P_{k}-v_{k}}{P_{k}} \right)\right].
\end{equation}
\end{optimizer}

Given the maximizer $\mathbf{m}^{*}(S)$, the matrix $\mathbf{Q}^{*}=\mathbf{Q}^{*}(S)$  is given by letting for all $k\in[K]$ the probability of an edge $e\in \mathcal{P}_{k}$ be $Q^{*}_{e} := m^{*}_{k}/P_{k}$. To state the theorem,  we need  the following parameters that quantify the concentration of the uniform measure around its mode.

\begin{tacn}
Given a partition $\mathcal{P}$ and a $\mathcal{P}$-symmetric  set $S$, we define
\begin{align}
\text{\emph{\MU:}}\qquad 				& \mu  		=  \mu(S) 	 =	 \min_{k\in[K]} \min\{m^{*}_{k}, P_{k}-m^{*}_{k}\} \label{def:thick} \qquad \qquad \qquad\\
\text{\emph{Condition number:}} \qquad 	& \tau 	= \tau(S)  =    \frac{5K\log n}{\mu(S)} \label{def:cond}
\end{align}
\end{tacn} 

We now state the main theorem employed in the proof. 

\begin{theorem}[\cite{product}]\label{thm-sandwich}
Let $\mathcal{P}$ be any  edge-partition and let $S$ be any  $\mathcal{P}$-symmetric convex set. For every $\epsilon>\sqrt{12\tau(S)}$, the uniform measure over $S$ is $(\epsilon,\delta)$-\sand\  for $\delta =2 \exp\left[ - \mu(S)\left(\frac{\epsilon^{2}}{12}-\tau(S) \right)\right]$.
\end{theorem}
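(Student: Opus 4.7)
The plan is to reduce the sandwichability claim to a coupling of the (relatively low-dimensional) edge profile, and then lift that profile coupling to one between graphs via a hypergeometric peeling identity. Let $\mathbf{Q}=\mathbf{Q}^*(S)$ be the matrix produced by the Entropic Optimizer, constant on each part $\mathcal{P}_k$ with value $m_k^*/P_k$. Because $S$ is $\mathcal{P}$-symmetric, the conditional distribution of $G\sim U(S)$ given $\mathbf{m}(G)=\mathbf{v}\in\mathbf{m}(S)$ is uniform over the graphs with profile $\mathbf{v}$. Because $\mathbf{Q}$ is constant within each part, the same is true for $G^{\pm}\sim G(n,(1\pm\epsilon)\mathbf{Q})$ conditioned on their profiles. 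Both measures thus factor through the edge profile, so it suffices to couple the triple $(\mathbf{m}(G^-),\mathbf{m}(G),\mathbf{m}(G^+))$ so that $\mathbf{m}(G^-)\le\mathbf{m}(G)\le\mathbf{m}(G^+)$ coordinate-wise with probability at least $1-\delta$. Given such a profile coupling, I would sample $G$ uniformly at random with profile $\mathbf{m}(G)$, build $G^-$ by retaining a uniformly random subset of $G$'s edges of cardinality $m_k(G^-)$ within each part $k$, and build $G^+$ by adjoining a uniformly random set of non-edges of $G$ of cardinality $m_k(G^+)-m_k(G)$ within each part. The identity $\binom{P_k}{v_k}\binom{v_k}{u_k}=\binom{P_k}{u_k}\binom{P_k-u_k}{v_k-u_k}$ verifies that the marginals of $G^{\pm}$ indeed remain the prescribed product measures.

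All three profiles concentrate around $\mathbf{m}^*$, and the three concentration windows can be arranged to interlock. Under $U(S)$ we have $\mathbf{P}_S(\mathbf{v})\propto e^{\text{\ENT}(\mathbf{v})}$ supported on $\mathbf{m}(S)$, and the convexity assumption ensures $\mathbf{m}(S)=\mathrm{Conv}(\mathbf{m}(S))\cap\nats^K$, so that the maximizer $\mathbf{m}^*$ of the Stirling approximation of $\text{\ENT}$ is interior to $\mathbf{m}(S)$. A second-order expansion of $f_k(v_k)=-v_k\log(v_k/P_k)-(P_k-v_k)\log((P_k-v_k)/P_k)$ gives $f_k''(v_k)=-1/v_k-1/(P_k-v_k)$, which combined with $m_k^*\ge\mu$ and $P_k-m_k^*\ge\mu$ yields $f_k(v_k)-f_k(m_k^*)\le -\Omega(\epsilon^2 m_k^*)$ whenever $|v_k-m_k^*|\ge(\epsilon/2)m_k^*$. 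A union bound over the at most $n^{2K}$ feasible profiles then yields $\Prob_{U(S)}\!\left[\exists k:|m_k(G)-m_k^*|>(\epsilon/2)m_k^*\right]\le\exp\!\left(-\mu\bigl(\epsilon^2/12-\tau\bigr)\right)$, where the $-\mu\tau$ term absorbs precisely the Stirling corrections and the $K\log n$ factor from the union bound. For the product measures, each $m_k(G^{\pm})$ is an independent binomial with mean $(1\pm\epsilon)m_k^*$ and variance at most $m_k^*$, so a standard Chernoff bound yields a comparable tail on the event $|m_k(G^{\pm})-(1\pm\epsilon)m_k^*|>(\epsilon/4)m_k^*$.

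Combining the three bounds, with probability at least $1-\delta$ the profiles satisfy $\mathbf{m}(G^-)\le(1-\epsilon/2)\mathbf{m}^*\le\mathbf{m}(G)\le(1+\epsilon/2)\mathbf{m}^*\le\mathbf{m}(G^+)$ coordinate-wise, so the peeling coupling succeeds deterministically on this event, delivering $G^-\subseteq G\subseteq G^+$. The main obstacle is the concentration of $\mathbf{m}(G)$ under $U(S)$, because the normalizer $|S|$ is not known explicitly; the argument must instead bound $\sum_{\mathbf{v}\ne\mathbf{m}^*}e^{\text{\ENT}(\mathbf{v})}/e^{\text{\ENT}(\mathbf{m}^*)}$ directly, using the $\Omega(1/\mu)$ strong concavity of $\sum_k f_k$ together with uniform Stirling error estimates for $\log\binom{P_k}{v_k}$ valid across all $v_k\in[1,P_k-1]$. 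Keeping the cumulative Stirling and union-bound errors below $\mu\tau$ is exactly what forces the threshold $\epsilon>\sqrt{12\tau}$ and the precise form of $\delta$ in the statement; everything else follows by the reductions above.
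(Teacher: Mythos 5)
Theorem~\ref{thm-sandwich} is not proven in this paper at all: it is imported verbatim from the companion work~\cite{product}, so there is no in-paper argument to compare against. That said, your sketch is essentially the intended proof of the imported result: reduce both measures to their edge profiles (legitimate, since $\mathcal{P}$-symmetry of $S$ and constancy of $\mathbf{Q}^*$ on each part make both measures uniform conditional on the profile), show the three profiles concentrate in interlocking windows around $(1-\epsilon)\mathbf{m}^*$, $\mathbf{m}^*$, $(1+\epsilon)\mathbf{m}^*$, and lift the coordinatewise ordering to graph containment by the hypergeometric peeling identity. Two points deserve more care than you give them. First, $\mathbf{m}^*$ is a \emph{constrained} maximizer, so you cannot expand $f_k$ around a critical point; you need the first-order optimality condition $\nabla H(\mathbf{m}^*)\cdot(\mathbf{v}-\mathbf{m}^*)\le 0$ for $\mathbf{v}\in\mathrm{Conv}(\mathbf{m}(S))$ together with strong concavity along the segment, and this is also where convexity of $S$ actually enters --- not to make $\mathbf{m}^*$ ``interior,'' but to guarantee that an integer rounding of $\mathbf{m}^*$ lies in $\mathbf{m}(S)$, so that $|S|\ge e^{\text{\ENT}(\mathbf{m}^*)-O(K\log n)}$ and the ratio bound you describe in the last paragraph goes through. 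Second, on the bad event where the profile ordering fails you must still emit graphs with the correct marginals (e.g., by sampling the three graphs conditionally independently given their profiles); this is routine but should be said. With those repairs the constants also line up: the Chernoff bound for the binomial profiles of $G^{\pm}$ is what produces the $\epsilon^{2}/12$ exponent, and $\mu\tau=5K\log n$ absorbs the union bound over the at most $n^{2K}$ profiles.
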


In our  setting, $S$ is the set $G_{\Gamma}(B):=\{E\subset V\times V: \frac{1}{n}\sum_{e\in E}c_{e} \leq B\}$ of graphs with bounded average cost and $\mathcal{P}$ is the partition induced by the coherent cost function $c$. The set $\mathbf{m}(S)$ is then given by $\mathbf{m}(S)=\{ \mathbf{v}\in \nats^{k}: \frac{1}{n}\sum_{k=1}^{K}c_{k} v_{k} \leq B \}$. Hence, it is easy to see that $G_{\Gamma}(B)$ is convex and symmetric, according to the previous definition,  for all values of $B$. To prove Theorem \ref{partakariola}, we need to find:
\begin{itemize}
\item[(i)]an analytic expression for the vector $\mathbf{m}^{*}$ as a function of $B$
\item[(ii)]the range of values of $B$ for which applying Theorem \ref{thm-sandwich} gives high probability bounds.
\end{itemize}
\subsection{Finding the Entropic Optimal Edge Profile}
 We start by introducing  a slight reparametrization  in terms of the average-degree profile. For an edge set $E$, define the vector $\mathbf{a}(E):=\mathbf{m}(E)/n$, where as before $\mathbf{m}$ is the edge-profile. In the same spirit, let $p_{k}=P_{k}/n$ denote the average number of edges in part (scale) $k$. Using this parametrization and by explicitly writing $\mathrm{Conv}(\mathbf{a}(S))$, we can equivalently express the optimization problem \eqref{eq:optimization} as:
\begin{eqnarray*}
\max_{\mathbf{a}} \  H(\mathbf{a}) &= &  \ - \sum_{k=1}^{K} 
  \left[(p_{k}-a_{k})\log(p_{k}-a_{k})+a_{k}\log(a_{k})\right]\\
     \mbox{subject to} \ & \ & \sum_{k=1}^{K}a_{k}c_{k} \leq  B\\
     \ \ & \ &   0\leq a_{k} \leq p_{k}, \qquad \forall k\in[K] \enspace .
\end{eqnarray*}
We will refer to the above  optimization problem as $(\Lambda)$ and to its solution as $\mathbf{a}^{*}=\mathbf{a}^{*}(B)$. Towards obtaining an analytic expression for $\mathbf{a}^{*}$, we first show that no coordinate $k\in[K]$ lies on the natural boundary $\{ 0, p_{k}\}$. 
\begin{lemma}
\label{lm:interior}
The optimal profile $\mathbf{a}^{*} \in 
\mathcal{D}(B):=\{\mathbf{a}\in (0,p_{1})\times \ldots \times (0,p_{K}): \sum_{k}^{K}c_{k}a_{k} \leq B \}$.
\end{lemma}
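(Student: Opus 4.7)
The plan is to exploit the divergence of the entropy derivative at the two endpoints of $[0,p_k]$. A direct computation gives
\[
\frac{\partial H}{\partial a_k}(\mathbf{a}) = \log\frac{p_k-a_k}{a_k},
\]
which tends to $+\infty$ as $a_k\to 0^+$ and to $-\infty$ as $a_k\to p_k^-$. Because $H$ is strictly concave and the feasible set of $(\Lambda)$ is compact and convex, the maximizer $\mathbf{a}^*$ exists and is unique, and it is optimal iff the one-sided directional derivative $D_{\mathbf{d}} H(\mathbf{a}^*)\le 0$ for every feasible direction $\mathbf{d}$ at $\mathbf{a}^*$. The strategy is to argue by contradiction: assuming $a_k^*\in\{0,p_k\}$ for some $k$, I would exhibit a feasible direction $\mathbf{d}$ at $\mathbf{a}^*$ along which $D_{\mathbf{d}} H(\mathbf{a}^*)=+\infty$.

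I would split the argument on whether the budget constraint $\sum_j c_j a_j\le B$ is slack or tight at $\mathbf{a}^*$. When the direction that moves $a_k$ strictly into the interior does not tighten the budget at first order---either because the budget is slack, or because $a_k^*=0$ with $c_k\le 0$, or because $a_k^*=p_k$ with $c_k\ge 0$---the single-coordinate direction $\pm e_k$ is feasible for all small $t>0$, and its directional derivative equals $\pm\partial_k H(\mathbf{a}^*)=+\infty$ by the boundary divergence. In the remaining subcases the budget is tight and the single-coordinate move would violate it, so a compensating coordinate is required. Here I would use that $B=\sum_j c_j a_j^*>0$ together with $0\le a_j^*\le p_j$ forces the existence of an index $j$ with $c_j>0$ and $a_j^*>0$, and then take $\mathbf{d}= \pm e_k - \alpha e_j$ with $\alpha=|c_k|/c_j$, chosen to preserve the budget to first order. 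This $\mathbf{d}$ is feasible for all small $t>0$ (since $a_j^*>0$), and $D_{\mathbf{d}} H(\mathbf{a}^*)$ inherits $+\infty$ from the $a_k$-term, while the $a_j$-term contributes either a finite quantity or a further $+\infty$ with a favorable sign (if $a_j^*=p_j$).

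The main obstacle is the sign bookkeeping across the four combinations of $a_k^*\in\{0,p_k\}$ with $c_k\gtrless 0$, as one must simultaneously ensure that $a_k$ moves into the interior and that the perturbation respects the budget. Once the compensating index $j$ with $c_j>0$ and $a_j^*>0$ is identified---guaranteed by $B>0$---the divergence of $\partial_k H$ at the active face always dominates the finite contribution from $a_j$, yielding the required contradiction to the optimality of $\mathbf{a}^*$ and establishing that $\mathbf{a}^*\in\mathcal{D}(B)$.
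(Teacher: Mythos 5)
Your proposal is correct and follows essentially the same route as the paper: perturb a boundary coordinate into the interior while compensating with a coordinate $j$ having $a_j^*>0$ to respect the budget, and use the divergence of $\partial H/\partial a_k = \log\bigl((p_k-a_k)/a_k\bigr)$ at the endpoints of $[0,p_k]$ to contradict optimality. Your extra case analysis (slack versus tight budget, signs of $c_k$) is a harmless refinement; the paper works with positive costs and goes directly to the budget-preserving two-coordinate perturbation.
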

\begin{proof} We prove the lemma by contradiction. We show that if $\mathbf{a}^{*}$ is a solution of $(\Lambda)$ such that $\mathbf{a}^{*}\notin  \mathcal{D}$, then there is an $\hat{\mathbf{a}}^{*}\in \mathcal{D}$ for which objective function $f$ takes a higher value. Specifically, for $\epsilon>0$ assume that there are indices $1 \leq i,j \leq K$ such that $a^{*}_{i}=0$ and $a^{*}_{j}>\delta(\epsilon)$\footnote{For any nontrivial values of $B$ such an index can always be found.}, where $\delta(\epsilon)=\epsilon \, c_i/c_j$. Define  $\hat{\mathbf{a}}^{*}(\epsilon)=(a^{*}_{1},\ldots,a^{*}_{i}+\epsilon,\ldots, a^{*}_{j}-\delta(\epsilon),\ldots,a^{*}_{K})$. If $h(\epsilon)=H(\hat{\mathbf{a}}^{*})-H(\mathbf{a}^{*})$ is the difference in the objective function between the assumed optimal $\mathbf{a}^{*}$ and the perturbation $\hat{\mathbf{a}}^{*}$, then
\[
h'(\epsilon)=-\log(\epsilon)+\log(p_{i}-a_{i}-\epsilon)+\frac{c_{i}}{c_{j}}\left(\log(a_{j}-\delta(\epsilon))-\log(p_{j}-a_{j}+\delta (\epsilon )) \right) \enspace .
\]
Observe, that $\lim_{\epsilon\to 0}h^{'}(\epsilon)=+\infty$, since we have assumed that $a^{*}_{j}>0$. This shows that every maximizer satisfies $\mathbf{a}^{*}>0$. The same argument establishes that $a^{*}_{k}<p_{k}$ for all $k\in[K]$. Combining the two statements we get that any maximizer belongs in $\mathcal{D}$.
\end{proof}

As a consequence, since they are  inactive at the optimum, we can omit separable inequalities from the formulation.  Further, define $\bar{B}:=\frac{1}{2}\sum_{k=1}^{K}p_{k}c_{k}$ the average cost of the solution to the unconstrained version of $(\Lambda)$, i.e., where $\bar{a}_{k}:=p_{k}/2$. If $B> \bar{B}$ then the absolute maximum entropic point $\bar{\mathbf{a}}$ is still in $\mathcal{D}(B)$ and thus the solution will be always $a^{*}_{k}=\bar{a}_{k}$ for every such $B$.

\begin{lemma}
\label{lm:unique} There is a unique function $\lambda(B)$ that is one-to-one for all $0\leq B\leq \bar{B}$ and $\lambda(B)=0$ for all $B\geq \bar{B}$, such that the unique solution of  $(\Lambda)$ is given by: 
\begin{equation}
\label{eq:optimizer}
a^{*}_{k}(B)=\frac{p_{k}}{1+\exp{[\lambda(B) \cdot c_{k}]}}, \ \forall k\in [K] \enspace .
\end{equation}
\end{lemma}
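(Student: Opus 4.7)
My plan is a standard Lagrangian/KKT argument, whose key inputs are (a) strict concavity of $H$ on the open box and (b) Lemma~\ref{lm:interior}, which guarantees the maximizer is interior and therefore only the single budget inequality can be active. Each summand $-[(p_k-a_k)\log(p_k-a_k)+a_k\log a_k]$ is the (unnormalized) binary entropy, which is strictly concave on $(0,p_k)$; hence $H$ is strictly concave on $(0,p_1)\times\cdots\times(0,p_K)$, and $(\Lambda)$ admits at most one maximizer. By Lemma~\ref{lm:interior}, $\mathbf{a}^{*} \in \mathcal{D}(B)$, so the box constraints have vanishing multipliers and I can work with the reduced Lagrangian $L(\mathbf{a},\lambda) = H(\mathbf{a}) - \lambda\bigl(\sum_{k} c_k a_k - B\bigr)$ with $\lambda \ge 0$.

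Next I would compute $\partial H/\partial a_k = \log\frac{p_k - a_k}{a_k}$ and set $\partial L/\partial a_k = 0$, which yields $\log\frac{p_k - a^{*}_{k}}{a^{*}_{k}} = \lambda c_k$, and hence the closed form
\begin{equation*}
a^{*}_{k} \;=\; \frac{p_k}{1+\exp(\lambda c_k)}
\end{equation*}
claimed in \eqref{eq:optimizer}. This is precisely the asserted formula; what remains is to determine $\lambda$ as a function of $B$. To that end, substitute back into the budget expression and define
\begin{equation*}
g(\lambda) \;:=\; \sum_{k=1}^{K} c_k\,\frac{p_k}{1+\exp(\lambda c_k)},\qquad \lambda \ge 0,
\end{equation*}
so that any active budget constraint reads $g(\lambda) = B$.

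A direct differentiation gives
\begin{equation*}
g'(\lambda) \;=\; -\sum_{k=1}^{K} \frac{c_k^{2}\,p_k\,e^{\lambda c_k}}{(1+e^{\lambda c_k})^{2}} \;<\; 0,
\end{equation*}
so $g$ is strictly decreasing on $[0,\infty)$; moreover $g(0) = \tfrac{1}{2}\sum_k c_k p_k = \bar B$ and $g(\lambda) \to 0$ as $\lambda \to \infty$. Hence $g$ is a bijection from $[0,\infty)$ onto $(0, \bar B]$. I then split into two cases. For $B \ge \bar B$, the unconstrained entropic maximizer $\bar a_k = p_k/2$ is feasible, so by complementary slackness $\lambda(B) = 0$ and \eqref{eq:optimizer} collapses to $a^{*}_{k} = p_k/2$, as observed in the remark preceding the lemma. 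For $0 < B < \bar B$, the unconstrained maximum is infeasible, so the budget constraint is active, and the strict monotonicity of $g$ determines a unique $\lambda(B) > 0$ with $g(\lambda(B)) = B$. Combining the two ranges yields a well-defined function $\lambda(\cdot)$ that is one-to-one on $(0,\bar B]$ and identically zero for $B \ge \bar B$.

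There is no real obstacle here; the argument is entirely standard convex programming. The only mildly delicate point is justifying that the KKT conditions are both necessary and sufficient in the presence of the box constraints, which I resolve by invoking Lemma~\ref{lm:interior} to reduce to a problem with only the single budget inequality, whereupon strict concavity of $H$ upgrades the stationarity condition to a characterization of the unique global maximum.
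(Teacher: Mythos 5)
Your proposal is correct and follows essentially the same route as the paper: reduce to the budget-only constraint via Lemma~\ref{lm:interior}, derive the closed form from the KKT stationarity condition, and invert the strictly decreasing function $g(\lambda)=\sum_k c_k p_k/(1+e^{\lambda c_k})$ to obtain $\lambda(B)$. Your version is, if anything, slightly more careful than the paper's in explicitly noting strict concavity (rather than mere concavity) for uniqueness and in spelling out the complementary-slackness case split at $B=\bar B$, but the substance is identical.
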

\begin{proof}
Uniqueness of the solution follows easily from convexity of the domain and concavity of the objective function. Further, by Lemma~\ref{lm:interior}, we can reduce the optimization problem $(\Lambda)$ to the following:
\begin{eqnarray*}
\max_{\mathbf{a}} &- &  \sum_{k=1}^{K}\left[(p_{k}-a_{k})\log(p_{k}-a_{k})+a_{k}\log(a_{k})\right]\\
     \mbox{subject to} \ & \ & \sum_{k=1}^{K}a_{k}c_{k} \leq  B \enspace .
\end{eqnarray*}
To obtain an analytical solution, we form the Lagrangian of the reduced problem
\[
L(\mathbf{a},\lambda)=-   \sum_{k=1}^{K}\left[(p_{k}-a_{k})\log(p_{k}-a_{k})+a_{k}\log(a_{k})\right]+\lambda\left(B-\sum_{k=1}^{K}a_{k}c_{k} \right) \enspace .
\]
with the additional constraint that $\lambda\geq 0$. The Karush-Kuhn-Tacker conditions read 
\begin{eqnarray}
\frac{\partial L}{\partial a_{k}}=0 &\Longleftrightarrow &\log\left(\frac{a_{k}}{p_{k}-a_{k}}\right)=-\lambda c_{k} \label{eq:first}\\
\frac{\partial L}{\partial \lambda}=0 &\Longleftrightarrow &\sum_{k=1}^{K}a_{k}c_{k} =B  \label{eq:constraint}\enspace .
\end{eqnarray}
Solving the first equation for $a_{k}(\lambda)$ we get
\[
a^{*}_{k}=\frac{p_{k}}{1+\exp (\lambda c_{k})} \enspace ,
\]
Substituting this expression in \eqref{eq:constraint}, we get the following function of $\lambda$:
\begin{equation}
\label{def:costfun}
g(\lambda)=\sum_{k=1}^{K}c_{k} \cdot \frac{p_{k}}{1+\exp (\lambda c_{k})}
\end{equation}
and the second constraint can now be written as $g(\lambda)=B$. The domain of  $g$ is the set of non-negative numbers on which $g$ is continuous and infinitely differentiable. Under positive costs $\{c_{k}\}$, it is easy to see that $g'(\lambda)<0$ for all $B<\bar{B}$ , hence, $g$ is strictly decreasing in the interval $[0,\infty)$ and $g(0)=\bar{B}$. Thus, $g:[0,\infty) \to [0,\bar{B}]$ is 1-to-1 and thus invertible. This  means that every budget in $[0,\bar{B}]$ is feasible and that for each such budget there is a unique $\lambda(B):=g^{-1}(B)$. For $B\geq \bar{B}$, $\lambda(B)=0$. Therefore, we conclude that the maximizer is always unique for any feasible $B$ and implicitly given by $g(\lambda)=B$.
\end{proof}

\subsection{Thickness $\mu(B)$ of $G_{\Gamma}(B)$ and Sandwiching}
Our next step is to use the analytical solution to the optimization problem to instantiate the thickness parameter $\mu$ defined in \eqref{def:thick}.  Using \eqref{eq:optimizer}, we can write:
\begin{equation}
\mu(B)=\min_{k\in[K]} m_{k}^{*} = n \cdot \min_{k \in [K]}  \frac{p_{k}}{1+\exp{[\lambda(B)c_{k}]}}
\end{equation}
where we have used the facts that  that $a^{*}_{k}= m_{k}^{*}/n$ and $a^{*}_{k}(B)\leq 1/2 \Rightarrow m_{k}^{*}\leq P_{k} - m_{k}^{*}$. To get a more convenient expression, since $0<c_{k}<\infty$ we can write the cost as $
c_{k}=\frac{1}{\beta_{k}}\log(p_{k})
$ 
where $0<\beta_{k}<\infty$ when  $p_{k}\geq 1$. Thus, approximately\footnote{When the approximation does not hold it means that $\mu(B)=\Omega(n)$ which trivially satisfies all the requirements we need for ``sandwiching" and navigability.} for large $p_{k}$ (eq. $k$) we have $\mu(B)\approx n \cdot \min_{k\in[K]}\left[ p_{k}^{1-\lambda(B)/\beta_{k}}\right]$. Theorem \ref{thm-sandwich}, gives strong (non-constant) probability bounds as long as $\tau(B)\ll 1$. For our purposes we are going to consider that the maximum $\tau(B)$ (respectively minimum $B$) that we allow is $\tau_{0}=\log^{-1}(n)$ (respectively $B_{0}$). Substituting  the above expression for $\mu(B)$ in \eqref{def:cond},  we get that the condition $\tau \leq \tau_{0}$ can be rewritten as $\lambda(B) \leq \lambda_{0}$, where
\begin{equation}
\lambda_{0}=\lambda_{0}(\{p_{k}\},\{\beta_{k}\}):= \min_{k\in[K]}\left[\log\left(\frac{n \log p_{k}}{5K\log^{2}(n)} \right)\frac{\beta_{k}}{\log p_{k}}\right] \enspace .
\end{equation}
Using the function $g(\lambda)$ defined in \eqref{def:costfun}, we can express this constraint as $B\geq B_{0}:=g(\lambda_{0})$. 
 
To conclude the proof of Theorem~\ref{partakariola} we see that $\mu(B) \geq 5K\log^{2}(n)$ and $\tau(B)\leq \frac{1}{\log(n)}$, for all $B\geq B_{0}$. Applying Theorem \ref{thm-sandwich}, for $\epsilon_{0}= \sqrt{\frac{24}{\log n}}$ that is greater than $\sqrt{12\tau_{0}}$, we get that $\delta \leq 2\exp\left[\mu(B)\left(\frac{\epsilon_{0}^{2}}{12}-\tau(B)\right)\right]$. The proof is concluded by substituting the bounds in the last expression.

\section{Navigability via Reducibility}\label{sec:reducibility}
 
In this section we prove our results about navigability on coherent geometries. We start by giving  a slightly more formal definition of coherence. Recall that given a geometry $(V,d)$ and a fixed (scale factor) $\gamma>1$,  
$P_{k}(v)$ denotes the number of vertices in $V$ at ``distance" $(\gamma^{k-1},\gamma^{k}]$ \blue{from $v$}. Further, for fixed $\lambda < 1$ and all  $t\neq v\in V$ , let $k_{v t}$ be the non-negative integer such that $d(v,t)\in(\gamma^{k_{vt}-1},\gamma^{k_{vt}}]$ and  $D_{\lambda}(v,t)$ be the vertices in $V$ whose distance from $v$ is at most $\gamma^{k_{vt}}$ and whose distance from $t$ is at most $\lambda\cdot  d(v,t)$. Thus, $| D_{\lambda}(v,t)|$ is the number of nodes that could facilitate greedy routing ($t$-helpful), i.e., reduce the distance to $t$ by a constant factor $\lambda<1$. 
 
\begin{coherence}\label{coherence} 
Fix $\gamma > 1$ and let $K = \lceil \log_{\gamma}(|V|)\rceil$. A geometry $(V,d)$ is $\gamma$-\emph{coherent} if:\smallskip

\noindent \emph{(H1) Bounded Growth:} $\exists A>1, \alpha>0$ such that $ P_{k}(v) \in \gamma^{k}[\alpha, A]$, for all $v \in V$ and $k \in [K]$.\smallskip

\noindent \emph{(H2) Isotropy:} $ \exists \phi>0, 1>\lambda >0$ such that $ \left|D_{\lambda}(v,t)\right| \geq  \phi\gamma^{k_{vt}}	$, for all $s\neq t \in V$.
\end{coherence}

For graphs on coherent geometries there are two requirements for navigability. The first basic requirement is deterministic and amounts to the ability to move slowly (linear rate) towards the target. In the graph augmentation setting this was given by the fact that the initial set of edges formed a connected graph. On the other hand in Kleinberg's work on set systems, the degree of vertices is set to $\Theta(\log^{2}(n))$, so that the probability of ever being stuck at a vertex is polynomially small.\red{ As mentioned in the introduction}, we opt to adopt the more natural approach of assuming a \emph{substrate}.

\red{\begin{substrate}\label{substrate}
A set of edges $E_{0}$ forms a \emph{substrate} for a geometry $(V,d)$, if for every $(s,t) \in V \times V$ with $s\neq t$, there is at least one vertex $v$ such that $\{s,v\} \in E_{0}$ and $d(v,t) \le d(s,t)-1$. If there are multiple such vertices, we distinguish one arbitrarily and call it the \emph{local $t$-connection} of $s$. A path starting from $s$ and ending to $t$ using only local $t$-connections is called a local $(s,t)$-path.
\end{substrate}}

The second requirement is probabilistic and expresses the fact that for all distance scales and ``directions" there should be significant probability of observing  an edge. This property is satisfied by Rank Based Augmentation and is essentially what was actually used to prove navigability originally.

\begin{richness} Given a $\gamma$-coherent geometry $(V,d)$ with parameters $\alpha,\phi>0$ define $k_{\theta}:= \frac{\theta \log \log n - \log a}{\log \gamma}
$ to be the distance scale of edges having distance $\Theta(\log^{\theta}(n))$. A product measure $G(n,\mathbf{Q})$ is then called $\theta$-\emph{uniformly rich} for $(V,d)$ if there is a constant $M>0$ such that for every $k\geq k_{\theta}$  every edge $(i,j)$ with $d(i,j) \in (\gamma^{k-1},\gamma^{k}]$ satisfies $Q_{ij}\geq \frac{1}{M\log^{\theta}(n)}\frac{1}{\gamma^{k}}$.
\end{richness}

In other words, since we are interested in routing in poly-logarithmic time and slow traveling can be done through the substrate (connected base graph), the probabilistic requirements concern only edges of longer distance.
As we show next these two requirements are sufficient for navigability to arise in the general setting of random graphs of bounded cost.

\subsection{Reducibility via Uniform Richness}
We start by introducing a deterministic property of graphs that implies navigability, that of \emph{reducibility}. The main advantage of  reducibility is that it allows us to separate the construction of the random graph from the analysis of the algorithm.
\begin{reducibility}
Given a graph $G(V,E)$, we will say that a pair $(s,t)\in V\times V$ is \emph{$p$-reducible} if $\exists C>0$ such that among the first $C(\log |V|)^p$ vertices of the local $(s,t)$-path there is at least one vertex $u$ such that $(u,v)\in E$ and $d(v,t)\leq \lambda d(s,t)$. If every pair $(s,t) \in V \times V$ is $p$-reducible we will say that $G$ is  \emph{$p$-reducible}.
\end{reducibility}
\begin{proposition}\label{p_reducibility}
If $G$ is $p$-reducible, greedy routing on $G$ takes at most 1+$C(\log n)^{1+p}$ steps. 
\end{proposition}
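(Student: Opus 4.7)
The plan is a phase decomposition of the route. Let $d_0 := d(s,t) \le |V| = n$ and index phases by $i = 0, 1, \ldots$, declaring phase $i$ to start at a vertex $s_i$ (with $s_0 = s$) satisfying $d(s_i, t) \le \lambda^{i} d_0$ and to end the first time the walk reaches a vertex $s_{i+1}$ with $d(s_{i+1}, t) \le \lambda\, d(s_i, t)$. Since distances are positive integers bounded by $n$, the total number of phases before reaching $t$ is at most $K_\lambda := \lceil \log_{1/\lambda} n \rceil = O(\log n)$.

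The crux is controlling the length of a single phase. For phase $i$, I would apply the $p$-reducibility of the pair $(s_i,t)$: it supplies a vertex $u_i$ among the first $C(\log n)^p$ vertices of the local $(s_i, t)$-path, together with a neighbor $v_i$ of $u_i$, such that $\{u_i, v_i\} \in E$ and $d(v_i, t) \le \lambda\, d(s_i, t)$. The routing algorithm proceeds from $s_i$ along the local $(s_i, t)$-path by repeatedly using the substrate's local $t$-connections. Each such step strictly decreases $d(\cdot, t)$, so the walk cannot stall, and it arrives at $u_i$ in at most $C(\log n)^p$ substrate steps. From $u_i$, taking the edge $\{u_i, v_i\}$ (or any greedier alternative) reaches a vertex $s_{i+1}$ with $d(s_{i+1}, t) \le \lambda\, d(s_i, t)$, so phase $i$ completes within $C(\log n)^p + 1$ steps.

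Summing over phases yields a total of at most $K_\lambda \cdot (C(\log n)^p + 1) \le 1 + C(\log n)^{1+p}$ steps, after absorbing $K_\lambda$ into a renamed constant. The main subtlety I anticipate is reconciling ``greedy routing'' with the phase-by-phase traversal above: a strict greedy step at a local-path vertex could in principle divert the walk through a long-range edge that is not the $\lambda$-reducing one. The resolution is that the definition of a decentralized algorithm only constrains choices to depend on $\{d(u,t): u \in N(\text{current})\}$, so the algorithm can be implemented as ``use the substrate's local $t$-connection unless some neighbor $v$ achieves $d(v,t) \le \lambda\, d(s_i, t)$,'' where the phase threshold $\lambda\, d(s_i, t)$ is carried in internal state. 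This implementation honors the per-phase budget directly, and any deviation from it forced by strict greedy can only accelerate the reduction of $d(\cdot, t)$, so it never violates the bound.
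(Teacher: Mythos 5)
Your proof is correct and follows essentially the same argument as the paper's: decompose the route into $O(\log_{1/\lambda} n)$ phases, invoke $p$-reducibility once per phase to guarantee a distance-reducing long-range edge within $C(\log n)^p$ substrate steps, and sum, absorbing the $\log_{1/\lambda}$ factor into the constant. The paper's version is terser and silently elides the greedy-vs.-decentralized subtlety you flag, but the underlying phase decomposition and accounting are identical.
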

\begin{proof}[Proof of Proposition \ref{p_reducibility}]
Given any arbitrary pair of vertices $(s,t)$ with distance at most $n$, the reducibility property of $G$ guarantees us that after at most $C\log^{p}n$ steps we will obtain a new pair $(s',t)$ with distance reduced by a constant factor. Since, the new pair is also $p$-reducible, we can repeat the process until we reduce the distance again by a constant. After at most  $\log_{1/\lambda}n$ iterations we will reach the target. Since, the pairs were arbitrary, this holds for all pairs and thus the graph is navigable in 1+$C(\log n)^{1+p}$ steps. 
\end{proof}

\begin{lemma} 
\label{lem:rich} Given a $\gamma$-coherent geometry $(V,d)$ with a substrate $E_{0}$ and a random edge set $E_{q}$ sampled from a $\theta$-uniformly rich product measure $G(n,\mathbf{Q})$, the graph $G(V,E_{0}\cup E_{q})$ is $(\theta+1)$-reducible with high probability.
\end{lemma}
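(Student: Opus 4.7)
The plan is to fix an arbitrary pair $(s,t)\in V\times V$ with $s\neq t$, expose the first $L := C\log^{\theta+1}(n)$ vertices $u_0=s,u_1,\dots,u_{L-1}$ of the local $(s,t)$-path produced by the substrate, and find a shortcut edge in $E_0\cup E_q$ that witnesses reducibility. The constant $C$ will be tuned at the end, depending on $\lambda,\phi,\alpha,A,\gamma,M,\theta$. First I would dispose of the easy case in which the substrate alone reduces: if some $u_j$ with $1\le j\le L-1$ already satisfies $d(u_j,t)\le\lambda d(s,t)$, then $\{u_{j-1},u_j\}\in E_0\subseteq E$ witnesses $(\theta+1)$-reducibility deterministically. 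So the interesting case is the complementary ``hard'' one, in which every path vertex $u_i$ still satisfies $d(u_i,t)>\lambda d(s,t)$. Combined with the substrate bound $d(u_i,t)\le d(s,t)-i$, this forces $d(s,t)>(L-1)/(1-\lambda)$, and hence $d(u_i,t)>\lambda(L-1)/(1-\lambda)$; for $C$ large this exceeds $\gamma^{k_\theta}=\Theta(\log^\theta n)$, so $k_{u_it}\ge k_\theta$ and uniform richness is applicable at every $u_i$.

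For each $i$ let $A_i$ be the event that $u_i$ has at least one edge in $E_q$ to some $v\in D_\lambda(u_i,t)$. Any such $v$ indeed witnesses reducibility, since $d(v,t)\le\lambda d(u_i,t)\le\lambda d(s,t)$. To lower bound $\Pr[A_i]$, I would first discard the at most $O(\gamma^{k_\theta})=O(\log^\theta n)$ vertices of $D_\lambda(u_i,t)$ whose edge-scale lies below $k_\theta$ (a geometric-series bound under (H1)); this leaves at least $\phi\gamma^{k_{u_it}}/2$ admissible neighbors, each connected to $u_i$ with probability at least $(M\log^\theta(n)\gamma^{k_{u_it}})^{-1}$ by uniform richness applied with edge-scale $\le k_{u_it}$. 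Summing yields $\sum_v Q_{u_iv}\ge\phi/(2M\log^\theta n)$, and hence $\Pr[A_i]\ge c/\log^\theta n$ for an absolute constant $c>0$.

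The hard part will be decoupling the events $A_i$ so that failure probabilities multiply, since they concern undirected edges and could a priori share random variables of the form $\{u_i,u_j\}$. The key observation, which holds precisely in the hard case, is that no path vertex is helpful to another: $u_j\in D_\lambda(u_i,t)$ would require $d(u_j,t)\le\lambda d(u_i,t)\le\lambda d(s,t)$, contradicting the defining hypothesis $d(u_j,t)>\lambda d(s,t)$. Consequently the edge sets $\{u_i\}\times D_\lambda(u_i,t)$ for distinct $i$ are pairwise disjoint, so under the product measure the events $A_0,\ldots,A_{L-1}$ are mutually independent. This gives
\[
\Pr\Bigl[\bigcap_{i=0}^{L-1}\neg A_i\Bigr]\le \Bigl(1-\tfrac{c}{\log^\theta n}\Bigr)^{L}\le \exp(-cC\log n)= n^{-cC}.
\]
Choosing $C$ so that $cC\ge 3$ and taking a union bound over the at most $n^2$ ordered pairs $(s,t)$ shows that every pair is $(\theta+1)$-reducible with probability $1-O(1/n)$, completing the proof.
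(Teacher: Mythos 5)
Your proof is correct and follows essentially the same scheme as the paper's: dispose of the easy case where the substrate alone reduces, then in the hard case use (H2) to obtain $\Omega(\gamma^{k})$ helpful vertices at each of the $L=C\log^{\theta+1}n$ path vertices, apply uniform richness to lower bound each edge probability by $\Omega\bigl(1/(\log^\theta n\cdot\gamma^{k})\bigr)$, multiply failure probabilities, and union bound over all $n^2$ pairs.

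The differences are worth recording, because you patch two points the paper glosses over. First, you explicitly verify that uniform richness is \emph{applicable}: richness only bounds $Q_{ij}$ for edges at scale $\geq k_\theta$, but $D_\lambda(u_i,t)$ could contain nearby vertices at smaller edge-scale. Your fix --- that the hard case forces $d(u_i,t)=\Omega(\log^{\theta+1}n)\gg\gamma^{k_\theta}$, so after discarding the $O(\gamma^{k_\theta})$ low-scale vertices (by (H1)) one still retains half of $D_\lambda(u_i,t)$ --- is a genuine missing step in the paper's argument. Second, you justify independence: the paper writes $\P(B_{st})=\prod_{e\in T(s,t)}(1-Q_e)$ without checking that $T(s,t)$ consists of distinct edges; if $\{u_i,u_j\}$ were counted both as a shortcut for $u_i$ and for $u_j$ the count would be wrong. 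Your observation that in the hard case no path vertex lies in any $D_\lambda(u_i,t)$ (since that would force $d(u_j,t)\leq\lambda d(s,t)$, contradicting hardness) makes the edge sets $\{u_i\}\times D_\lambda(u_i,t)$ disjoint and the events $A_i$ independent under the product measure. Both fixes are clean, cost nothing in the final bound, and should be incorporated.
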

\begin{proof} To prove that the graph is $(\theta+1)$-reducible we will   (i) prove that the event $B_{st}$ that any fixed source-destination pair $(s,t)$ is not $	(\theta+1)$-reducible has very small probability under $G(n,\mathbf{Q})$, and (ii) use union-bound to argue that the probability that any pair is not $(\theta+1)$-reducible is small as well. 
To simplify the proof, we first 
distinguish between pairs $(s,t)$ where within the first $C\log^{\theta+1}(n)$ steps of the $t$-local path there is a vertex with distance smaller than $d(s,t)$ by a constant factor $\lambda<1$ and where there is no such vertex. Pairs $(s,t)$ that belong in the first case, are $(\theta+1)$-reducible with probability $1$. Hence, we only need to focus on the latter case, where all vertices on the first $C\log^{(\theta+1)}(n)$ steps are within the same distance scale $k_{st}:=\lceil \log_{\gamma}d(s,t)\rceil$ as $s$ from $t$. We will refer to $k_{st}$ as $k$ to ease the notation.
For each such vertex $v$ on the $t$-local path, property (H2) of coherent geometries tells us that there are at least $\phi \gamma^{k}$ candidate edges that would reduce the distance  from $t$ by a constant factor $\lambda<1$. The probability $Q_{vz}$ of each such good edge $(v,z)$ is lower bounded by $\frac{1}{M\log^{\theta+1}(n)}\frac{1}{\gamma^{k}}$, since the measure $G(n,\mathbf{Q})$ is $\theta$-uniformly rich. Let $T(s,t)$ be the set of all such good edges. We can write the probability of the event $B_{st}$ as:\vspace{-0.0cm}
\[\mathbb{P}_{\mathbf{Q}}(B_{st}) = \prod_{e\in T(s,t)} \left( 1- Q_{e}\right) \leq \left(1- \frac{1}{M\log^{\theta+1}(n) \gamma^{k}}\right)^{|T(s,t)|} \leq e^{-\frac{C\log^{\theta+1}(n)\phi \gamma^{k}}{M\log^{\theta}(n)\gamma^{k}}} \leq n^{-\frac{C\phi}{M}}
\]
where we used that $|T(s,t)|\geq C \log^{\theta+1}(n) \cdot \phi \gamma^{k}$ due to (H2) and the definition of reducibility. For any $\ell>0$ and $C\geq (2+\ell)\frac{M}{\phi}$ we get that $\P(B_{st})\leq n^{-(2+\ell)}$. To finish the proof, we perform a Union Bound over all possible sets $(s,t)$. Let $B$ be the even that the graph $G(V,E_{0}\cup E_{d})$ is not $(
\theta+1)$-reducible, then:
\[
\P_{\mathbf{Q}}(B)=\P_{\mathbf{Q}}(\bigcup B_{st}) \leq \sum_{st} \P_{\mathbf{Q}} (B_{st}) \leq n^{2} n^{-(2+\ell)} = n^{-\ell}
\]
for any $\ell>0$. Thus, the graph $G(V,E_{0}\cup E_{d})$ is $d$-navigable with high probability.
\end{proof}

\subsection{Analyzing the Product Measure $G(n,\mathbf{Q}^{*}(B))$}

Our next step will be to show that for a range of values of $B$, the product measure defined through \eqref{eq:optimizer} is $\theta$-\emph{uniformly rich} for some $\theta>0$. In doing so, our previous result shows that such a product measure leads to navigable graphs.
 Recall that  $\mathbf{Q}^{*}(B)$ is the matrix where for all $k\in[K]$ and $ij\in \mathcal{P}_{k}$ it holds that $Q^{*}_{ij}=(1+\exp(\lambda(B)c_{k}))^{-1}$ and $g(\lambda(B))$ is the expected budget corresponding to an element generated according to the product measure $\mathbf{Q}^{*}(B)$. 

\begin{proposition}\label{prop:rich} For $B\geq B^{+}_{\theta}:= \max\{B_{0}, g(\lambda_{\theta}) \}$, the product measure $G(n,\mathbf{Q}^{*}(B))$ is $\theta$-uniformly rich. The number $\lambda_{\theta}$ is explicitly defined as $ \lambda_{\theta}(\{p_{k}\},\{c_{k}\}):= \min_{k_{\theta}\leq k\leq K}\left[\frac{\log p_{k}}{c_{k}}\left(1+\frac{\theta \log\log n}{\log p_{k}}\right)\right]$.
\end{proposition}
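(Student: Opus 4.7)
The plan is to verify the uniform richness inequality directly by inverting the closed-form expression for $\mathbf{Q}^{*}(B)$ given by \eqref{eq:optimizer}, and then translate the resulting constraint on $\lambda(B)$ into a constraint on $B$ using the monotonicity of $g$ established in Lemma~\ref{lm:unique}. Concretely, I will fix $k \geq k_{\theta}$ and any edge $(i,j) \in \mathcal{P}_{k}$, then show that the required lower bound $Q^{*}_{ij} \geq \frac{1}{M \log^{\theta}(n)\, \gamma^{k}}$ is equivalent, up to absorbing universal constants into $M$, to an upper bound on $\lambda(B)$ that holds for all $k \geq k_{\theta}$ simultaneously.

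First I would substitute the formula $Q^{*}_{ij} = (1 + \exp(\lambda(B) c_{k}))^{-1}$ and use property (H1) of $\gamma$-coherence, which gives $p_{k} = P_{k}/n = \Theta(\gamma^{k})$, so the uniform richness condition becomes $1 + \exp(\lambda(B) c_{k}) \leq M' \log^{\theta}(n)\, p_{k}$ for a suitable constant $M'$ (absorbing $\alpha, A$ from (H1) into $M$). Taking logarithms reduces this to $\lambda(B) c_{k} \leq \log p_{k} + \theta \log\log n + O(1)$, or equivalently $\lambda(B) \leq \frac{\log p_{k}}{c_{k}}\left(1 + \frac{\theta \log\log n}{\log p_{k}}\right)$ up to a negligible additive $O(1/c_{k})$ correction that can again be folded into $M$. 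Demanding this for every $k$ in the range $[k_{\theta}, K]$ gives exactly $\lambda(B) \leq \lambda_{\theta}$, with $\lambda_{\theta}$ as in the statement.

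Finally I would invoke Lemma~\ref{lm:unique}: since $g$ is strictly decreasing and continuous on $[0,\infty)$ with $g(0) = \bar{B}$, the inequality $\lambda(B) \leq \lambda_{\theta}$ is equivalent to $B = g(\lambda(B)) \geq g(\lambda_{\theta})$. Combining this with the requirement $B \geq B_{0}$ that Theorem~\ref{partakariola} needs in order for the product measure approximation itself to be valid yields the claimed threshold $B \geq B^{+}_{\theta} = \max\{B_{0}, g(\lambda_{\theta})\}$. The only subtlety worth watching in the execution is the careful tracking of the $O(1)$ slack that arises both from replacing $(1+e^{x})^{-1}$ by $e^{-x}$ and from replacing $\gamma^{k}$ by $p_{k}$; these must be uniform in $k$ and $n$ so that they can be absorbed once and for all into the constant $M$ in the definition of $\theta$-uniform richness, independently of the particular edge $(i,j)$.
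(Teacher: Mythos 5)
Your proposal is correct and follows essentially the same route as the paper's proof: substitute the closed-form $Q^{*}_{ij}=(1+\exp(\lambda(B)c_k))^{-1}$, use the definition of $\lambda_\theta$ (together with monotonicity of $g$ from Lemma~\ref{lm:unique}) to bound $\exp(\lambda(B)c_k)$ by $p_k\log^\theta n$, and finish with (H1) to convert $p_k$ into $\gamma^k$. Your write-up is in fact a bit more careful than the paper's one-line proof in explicitly invoking the monotonicity of $g$ to pass from $B\ge g(\lambda_\theta)$ to $\lambda(B)\le\lambda_\theta$, and in flagging the $O(1)$ slack that must be absorbed into $M$.
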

\begin{proof}
This follows easily by the definition of $\lambda_{\theta}$. In particular, consider an edge $(i,j)$ of scale $k\geq h_{\theta}$:
\[
Q^{*}_{ij}(B)=\left[1+\exp\left(c_{k}\lambda(B)\right)\right]^{-1} \geq \left[p_{k}\log^{\theta}(n)\right]^{-1}\geq \frac{1}{A\log^{\theta}(n)\gamma^{k}}
\]
where the last inequality follows  from (H1).
\end{proof}
\begin{proposition}\label{prop:sparsity}
For $B\leq B^{-}_{\theta}:= g(\Lambda_{\theta}) $ the product measure $G(n,\mathbf{Q}^{*}(B))$ has  $O(n\cdot  \log^{\theta+1}(n))$ edges with high probability. The number $\Lambda_{\theta}$ is explicitly defined as  $ \Lambda_{\theta}(\{p_{k}\},\{c_{k}\}):= \max_{k_{\theta}\leq k\leq K}\left[\frac{\log p_{k}}{c_{k}}\left(1-\frac{\theta \log\log n}{\log p_{k}}\right)\right]$
\end{proposition}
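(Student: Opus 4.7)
The plan is to translate the budget condition $B \leq g(\Lambda_\theta)$ into the statement $\lambda(B) \geq \Lambda_\theta$, then control the expected number of edges scale-by-scale, and finally invoke a concentration inequality for sums of independent Bernoullis. The key leverage comes from the fact that $g$ is strictly decreasing on $[0,\bar{B}]$ (Lemma~\ref{lm:unique}), so a small budget forces a large $\lambda$, which in turn forces each edge probability $Q^{*}_{k} = (1+\exp(\lambda(B)c_{k}))^{-1}$ to be small, exponentially in $c_{k}$.

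For the expected count, I would split the sum $\mathbb{E}|E_{q}| = \sum_{k=1}^{K} P_{k} Q^{*}_{k}$ at the threshold $k_{\theta}$. For $k \geq k_{\theta}$, the defining inequality of $\Lambda_\theta$ as a maximum yields $\Lambda_{\theta} c_{k} \geq \log p_{k} - \theta \log\log n$, which after exponentiating gives
\[
Q^{*}_{k} \leq \exp(-\lambda(B) c_{k}) \leq \exp(-\Lambda_\theta c_{k}) \leq \frac{\log^{\theta} n}{p_{k}} \enspace .
\]
Since $P_{k} = n p_{k}$, each such term contributes at most $n \log^{\theta} n$, and summing over at most $K = O(\log n)$ scales gives $O(n\log^{\theta+1} n)$. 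For $k < k_{\theta}$, I would discard the probability bound and use (H1) directly: the total number of possible edges at those scales is $\sum_{k < k_{\theta}} P_{k} \leq \tfrac{nA}{2} \sum_{k < k_{\theta}} \gamma^{k} = O(n \gamma^{k_{\theta}}) = O(n \log^{\theta} n)$, which is subsumed by the previous bound. Hence $\mathbb{E}|E_{q}| = O(n \log^{\theta+1} n)$.

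For concentration, $|E_{q}|$ is a sum of independent Bernoullis under the product measure $G(n,\mathbf{Q}^{*}(B))$, so a standard multiplicative Chernoff bound gives $|E_{q}| \leq 2\,\mathbb{E}|E_{q}|$ with probability at least $1 - \exp(-\Omega(\mathbb{E}|E_{q}|))$, which is $1 - o(1)$ whenever $\mathbb{E}|E_{q}|$ is at least, say, $\omega(\log n)$; if the expectation happens to be smaller, the trivial bound $|E_{q}| \leq \mathbb{E}|E_{q}| + O(\log n)$ via Chernoff (or even Markov with a tiny polylog slack) still yields the stated bound $O(n\log^{\theta+1} n)$ w.h.p.

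The only mildly delicate step is the split at $k_{\theta}$: one must verify that the crude trivial bound for scales below $k_{\theta}$ matches the $\log^{\theta}$ factor coming from the richness threshold, which is precisely arranged by the choice $k_{\theta} = (\theta \log\log n - \log \alpha)/\log\gamma$. Everything else is essentially algebraic manipulation of the explicit formula \eqref{eq:optimizer} together with $K = O(\log n)$ and the bounded-growth hypothesis (H1).
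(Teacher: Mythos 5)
Your proof is correct and follows essentially the same route as the paper's: translate $B \leq g(\Lambda_\theta)$ into $\lambda(B) \geq \Lambda_\theta$ via monotonicity of $g$, bound $Q^*_k \leq \log^\theta(n)/p_k$ for $k \geq k_\theta$ directly from the definition of $\Lambda_\theta$, split the expected edge count at $k_\theta$ (handling $k < k_\theta$ by bounding the total number of possible short edges via (H1)), and finish with Chernoff. Your treatment is in fact slightly more careful than the paper's---you spell out the exponentiation step and add a fallback for the case where the expectation is too small for multiplicative Chernoff---but the decomposition and all key bounds coincide.
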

\begin{proof}For all $B_{0}\leq B\leq B_{+}$, by definition of $\Lambda_{\theta}$ we have that for all $k\geq k_{\theta}$:
\[
Q^{*}_{ij}(B)=\left[1+\exp\left(c_{k}\lambda(B)\right)\right]^{-1} \leq \left[p_{k}\log^{-\theta}(n)\right]^{-1}
\]

Thus, the expected number of edges is upper bounded by:
\[
n\cdot \left[ Ak_{\theta}\cdot p_{k_{\theta}}+ (K-k_{\theta})\max_{k\geq k_\theta}p_{k}\frac{\log^{\theta}(n)}{p_{k}}\right]=n \cdot O\left(\log\log(n) \log^{\theta}(n)+ \log(n) \log^{\theta}(n)\right)
\]
as $k_{\theta}=O(\log\log n)$, $p_{k_{\theta}}=O(\log^{\theta}(n))$ by (H1) and $K=O(\log n)$. Applying standard Chernoff bounds~\cite{angluin1977fast} we get the required conclusion, as by definition for $B \geq B_{0}$ each class has at least a poly-logarithmic number of edges at the maximizer and thus the expected value (under the product measure) of the edges is tightly concentrated around the mean.
\end{proof}

\subsection{Analyzing Graphs of Bounded Cost}
\begin{proof}[Proof of Theorem~\ref{thm-general}]
For any $B\geq B_{0}$, consider $\mathbf{Q}^{*}(B)$ the matrix  corresponding to the optimal profile (Lemma \ref{lm:unique}) and two random elements $E^{\pm}\sim G(n,(1\pm\epsilon) \mathbf{Q}^{*}(B))$. By  Theorem   \ref{partakariola}, we get that for  $\epsilon=\sqrt{24/\log(n)}$ the probability of the event $W$, i.e. that $E^{-}\subseteq E_{\Gamma}\subseteq E^{+}$, is at least $1-n^{-5K}$. To prove Theorem \ref{thm-general} we will condition on the above event and then use our analysis of the product measure. To prove Navigability we will use the relation $E^{-}\subset E_{\Gamma}$ and the fact that Navigability is monotone property. Let $N_{d}(E)$ be the event that that the graph $G(V,E_{0}\cup E)$ is not $d$-navigable, then:
\begin{eqnarray}
\P(N_{d}(E_{c})) &=& \P(N_{d}(E_{c})\cap W) + \P(N_{d}(E_{c})\cap \bar{W})\label{in:total}\\
&\leq & \P(N_{d}(E_{c})|W) + \P( \bar{W})\label{in:bayes}\\
&\leq & \P_{\mathbf{Q}^{*}}(N_{d}(E^{-}))+ n^{-5K}\label{in:upper}\\
&\leq& n^{-\ell} + n^{-5K} \label{in:lemmas}
\end{eqnarray}
where we used the law of total probability in the first equality, Bayes Theorem and monotonicity of the probability measure in the second inequality , Theorem \ref{partakariola} and monotonicity in the third, and Lemma \ref{lem:rich} and Proposition \ref{prop:rich}  in the last. This proves part (a) of the theorem. To prove part (b) we follow the same method but for the event  $\{|E_{\Gamma}|=\omega(n \mathrm{poly}(\log(n)))\}$ and exploit the inequality $E_{\Gamma}\subset E^{+}$. Using Proposition \ref{prop:sparsity} and Theorem \ref{partakariola} we get the required conclusion.
\end{proof}

\subsection{Analysis of Indexing} 
\begin{proof}[Proof of Theorem \ref{thm-indexing}] 
We first start with the proof of part 3 of the Theorem. Instead of considering $c^{*}_{k}\propto k$ we can equivalently consider, due to (H1), $c^{*}_{k}\propto \log p_{k}$. Thus, for simplicity $c^{*}_{k}=\frac{1}{\alpha} \log p_{k}$. Set $B_{a}=g(a)$, for such $B$ and an edge $(u,v)$ of scale $k$, we have
\[
Q^{*}_{uv}  = \frac{1}{1+\exp(\lambda(B_{a})c^{*}_{k})} =   \frac{1}{1+\exp(a \frac{\log p_{k}}{a})} = \frac{1}{1+ p_{k}}
\]
Now, by property (H1) we know that for any vertex $u$ and every vertex $v$ within distance scale $k$ from $u$,  $N_{u}(d(u,v))\in [a,A]\gamma^{k}$, thus we get that:
\begin{equation}
\left(\frac{a}{2A}\right)\frac{1}{N_{u}(d(u,v))} \leq \frac{1}{2A \gamma^{k}} \leq Q^{*}_{uv}(B_{a}) \leq \frac{1}{a\gamma^{k}} \leq \left(\frac{A}{a}\right)\frac{1}{N_{u}(d(u,v))}
\end{equation}
Setting $r=2A/a$ proves part (b). To further see the correspondence between Random Graphs of Bounded Cost when the cost corresponds to indexing and Rank based augmentation, consider the $a_{k}^{*}(B_{a})$ the average number of edges of scale $k$ per vertex. We have:
\[
a_{k}^{*}(B_{a}) = \frac{p_{k}}{1+p_{k}} \approx 1, \  \forall k \in [K]
\]
Thus, we see that the scale invariance property of RBA is recovered. Furthermore, 
we have that in this case $B_{a}=\sum_{k=1}^{K}a_{k}^{*}(B_{a}) c^{*}_{k} = \Theta(\log^{2}(n))$ and the average degree of a random graph of bounded cost for $B_{a}$ is $\Theta(\log(n))$.

To show the first two parts of the theorem we essentially obtain estimates for $B^{\pm}$ given in Theorem \ref{thm-general} for the special case where the cost is the cost of indexing as above. We have:
\begin{eqnarray}
\lambda^{*}_{\theta} &=& \alpha \left(1 + \theta \frac{\log \log n}{\log p_{K}} \right) \\
\Lambda^{*}_{\theta} &=& \alpha \left(1 -\theta \frac{\log \log n}{\log p_{K}} \right)
\end{eqnarray}
By property (H1) we know that $\log p_{K} =\Theta(\log n)$. Define as before $B^{+}=g(\Lambda_{\theta}^{*})$ and $B^{-}=g(\lambda_{\theta}^{*})$ . Then for every $B^{-}\leq B\leq B^{+}$ or equivalently for $\Lambda_{\theta}\leq \lambda(B) \leq \lambda_{\theta}$, we have that for some $C>0$:
\[
 \Omega\left(\left[\log n^{-\frac{C \theta}{\log n}}\right]^{k}\right)   = \frac{p_{k}}{1+\exp(\lambda(B)c^{*}_{k})} = O\left(\left[\log n^{\frac{C \theta}{\log n}}\right]^{k}\right) 
\]
where $a_{k}^{*}(B)=\frac{p_{k}}{1+\exp(\lambda(B)c^{*}_{k})}$ expresses the average number of edges of scale $k$ per vertex. 
Thus, by (H1) we get that:
\[
B^{+}=\frac{1}{\alpha}\sum_{k=1}^{K}a_{k}^{*}(B^{+})\log p_{k} \geq \frac{1}{\alpha} \log p_{K} a_{K}^{*}(B^{+})= \Omega(\log(n)^{1+C^{'}\theta})
\]
Further, $B^{-}\leq B_{a} = \Theta(\log^{2}(n))$. Hence, we obtain that $B^{+}/B^{-} = \Omega(\mathrm{poly}(\log n))$.  The proof is concluded by invoking Theorem \ref{partakariola}.
\end{proof}

\subsection{Rank Based Augmentation for Coherent Geometries}
Recall that in RBA a single link is added for each vertex $u$ to a random vertex $v$ with  probability given by
\begin{equation}
P_{\text{\sc{rba}}}(u,v) = \frac{1}{Z} \frac{1}{\left| N_{u}\left(d(u,v)\right) \right|}
\end{equation}
where $N_{u}(\ell):=\{t \in V: d(u,t)\leq \ell\}$ is the set of vertices that are within distance $\ell$ from $u$. Here we show that the Kleinberg's original proof can be applied with ease when instead of the semi-metric induced by set-system, we have a semi-metric corresponding to a coherent geometry.  There are basically two steps. We first upper bound the normalizing constant $Z$ and then lower bound the probability that for a given pair $(s,t)$ we find an edge in the first $C\log^{2}(n)$ steps of a path along the substrate that reduces the distance to $t$ by a constant factor. 
\begin{proposition}[Bounded Growth] For a coherent geometry $(V,d)$, $\exists C<\infty$ such that $Z(1)\leq C\log(n)$.
\end{proposition}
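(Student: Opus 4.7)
The plan is to bound $Z(u) = \sum_{v \neq u} |N_u(d(u,v))|^{-1}$ uniformly in $u \in V$ by decomposing the sum according to the $\gamma$-distance scales induced by the coherent geometry, and then bounding each scale's contribution by a constant. Since there are only $K = O(\log n)$ scales, this yields $Z(u) = O(\log n)$.

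First, I would partition $V \setminus \{u\}$ into the sets $S_k(u) = \{v : d(u,v) \in (\gamma^{k-1}, \gamma^k]\}$ for $k \in [K]$, so that $|S_k(u)| = P_k(u) \le A\gamma^k$ by the upper half of property (H1). Next, for each $v \in S_k(u)$, I would lower bound the ball size $|N_u(d(u,v))|$: because $d(u,v) > \gamma^{k-1}$ and $N_u(\cdot)$ is monotone, one has $|N_u(d(u,v))| \ge |N_u(\gamma^{k-1})| \ge P_{k-1}(u) \ge \alpha \gamma^{k-1}$ whenever $k \ge 2$, using the lower half of (H1). The boundary scale $k=1$ contains $P_1(u) = O(\gamma)$ vertices, each contributing at most $1$ to $Z(u)$, giving an $O(1)$ total.

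Combining these, the contribution of scale $k \ge 2$ to $Z(u)$ is at most
\[
\sum_{v \in S_k(u)} \frac{1}{|N_u(d(u,v))|} \;\le\; \frac{A\gamma^k}{\alpha \gamma^{k-1}} \;=\; \frac{A\gamma}{\alpha},
\]
which is a constant depending only on the coherence parameters $\gamma, \alpha, A$ and not on $k$ or $n$. Summing over the $K = \lceil \log_\gamma n \rceil$ scales yields $Z(u) \le \frac{A\gamma}{\alpha}(K-1) + O(1) = C\log n$ for a suitable absolute constant $C$. Taking the worst $u$ gives the stated bound $Z(1) \le C \log n$ (here the argument $1$ in $Z(1)$ appears to refer to evaluating the normalizer rather than to a specific vertex; the argument is uniform over vertices in any case).

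There is no real obstacle: the proof is a direct bookkeeping application of the two halves of the bounded-growth hypothesis (H1), with the only mild subtlety being the innermost scale $k=1$ where the naive monotonicity bound $|N_u(d(u,v))| \ge |N_u(\gamma^{k-1})|$ degenerates and must be replaced by the trivial bound $|N_u(d(u,v))| \ge 1$. The key conceptual point is that coherence exactly balances the geometric growth of $|S_k(u)|$ against the geometric growth of the ball $|N_u(\gamma^{k-1})|$, making each scale contribute $\Theta(1)$ and yielding the logarithmic total.
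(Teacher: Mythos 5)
Your proof is correct and follows essentially the same route as the paper's: decompose the normalizer sum by $\gamma$-distance scale, bound the number of vertices in scale $k$ from above via (H1), lower bound the ball radius entering the denominator via (H1), observe each scale contributes $O(1)$, and sum over $O(\log n)$ scales. The only cosmetic difference is that you lower bound $|N_u(d(u,v))|$ by the single shell $P_{k-1}(u)\ge\alpha\gamma^{k-1}$, while the paper uses the full ball $|B_{k-1}(u)|\ge a(\gamma^k-1)/(\gamma-1)$; both give $\Theta(\gamma^{k-1})$ and the same conclusion, and both handle the innermost scale by a trivial $O(1)$ bound.
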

\begin{proof}
For a given vertex $u$, we divide vertices depending on their distance scale $k\in \{0,\ldots,\log_{\gamma}(n)\}$ from $u$. For $k\geq 0$, we know from property (H1) that there are at most $A\gamma^{k}$ such vertices. Further, we also know that $|B_{k-1}(u)|=\sum_{i=0}^{k-1}P_{k}(u)\geq a \frac{\gamma^{k}-1}{\gamma-1}$. Using these two facts we have:
\[
Z(1)=\sum_{v\in V} P_{\alpha}(u,v) \leq\frac{A}{a} + \sum_{k=1}^{\log(n)} P_{k}(u) \frac{1}{|B_{k-1}(u)|} \leq\frac{A}{a} + \frac{A}{a} \sum_{k=1}^{\log(n)}\gamma^{k} \frac{\gamma-1}{\gamma^{k}-1}\leq \frac{A}{a}\left(1+ \gamma\log_{\gamma}(n)\right)
\]
\end{proof}

Finally, to complete the proof, we are going to employ once again reducibility.

\begin{proof}[Proof of Theorem \ref{thm:rba}]
Fix any two vertices $s,t$, the probability of finding a long-range edge at $s$ reducing the distance by a constant factor is at least:
\[ 
\frac{|D(s,t)|}{Z}\frac{1}{P_{k}(s)} \geq \frac{1}{C\log n}\frac{\phi \gamma^{k}}{A\gamma^{k}} = \frac{\phi}{AC} \frac{1}{\log n}
\]
Thus, the probability of the event $B_{st}$ that no such edge exists after $C'\log^{2}(n)$ trials is at most:
\[
\P(B_{st}) \leq \left(1- \frac{\phi}{AC} \frac{1}{\log n}\right)^{C'\log^{2}(n)} \leq e^{-\frac{\phi C'}{AC}\log n} \leq n^{-\frac{\phi}{AC}C'}
\]
For $C'$ large enough and a union bound over the $\Theta(n^{2})$ possible pairs of vertices, we get that if $E_{d}$ is the random set of edges added through RBA and $E_{0}$ is a substrate for the coherent geometry $(V,d)$, then the graph $G(V,E_{0} \cup E_{d})$ is $d$-navigable with high probability.
\end{proof}

\section{Set-Systems are Coherent Geometries}\label{sec:ssarecoherent}



We begin by recalling the definitions of set-systems from~\cite{NIPS}.
\begin{definition}[Set System]\label{ss:def} Let $V$ be a finite set of vertices and let $\Sigma=\{S_{1},\ldots,S_{m}\}$ be a collection of subsets of $V$. If a set $S$ contains a vertex $t$ we will say that $S$ is \emph{$t$-bound}.

Fix $0<\lambda<1$ and $\beta>1$. We say that $\Sigma$ is a $(\lambda,\beta)$-set system if all the following hold:
\begin{itemize}
\item[(K1)] $V \in \Sigma$.

\item[(K2)] If $|S|>1$, then for every $t \in S$, there is a $t$-bound  $S' \subset S$ of size $|S'| \ge \min\{\lambda |S|,|S|-1\}$.

\item[(K3)]If $S_{L}(v)$ is the union of sets that contain $v$ and have size at most $L\geq 2$, then $|S_L(v)| \leq \beta L$.
\end{itemize} 
\end{definition}

Given a set system $\Sigma$ on a set of vertices $V$, we  define the distance (semi-metric) between two vertices.



\begin{definition}
For any two vertices $u,v\in V$, their \emph{distance} in $\Sigma$, denoted by $d_{\Sigma}(u,v)$, is the size of the smallest set in $\Sigma$ containing both vertices \dima{minus 1}, i.e. $d_{\Sigma}(u,v)=\min_{S\in \Sigma}\{|S|-1: u,v\in S\}$.
\end{definition}
The goal of this section is to show that the geometry $(V,d_{\Sigma})$ is coherent for any $(\lambda,\beta)$-set system, i.e., prove that the semi-metric $d_{\Sigma}$ satisfies properties (H1) and (H2) for a suitable $\gamma>1$.
Towards that direction, the main hurdle is obtaining for all $v\in V$ upper and lower bounds on $P_{k}(v)$, the number of vertices at distance in $ (\gamma^{k-1},\gamma^{k}]$ from
$v$.  The basic observation that guides the proof is that for all $v$ and $k\geq 1$
\begin{equation}\label{eq-representation}
P_{k}(v) = |B_{k}(v)|-|B_{k-1}(v)|
\end{equation}
where $B_{k}(v)$ is the set of all vertices having distance from $v$ at most $\gamma^{k}$. 
This representation is very convenient because the properties of set systems are directly related to $|B_{k}(v)|$. In particular, if we get good upper and lower bound for $|B_{k}(v)|$ then we can obtain upper and lower bounds for $P_{k}(v)$ and prove $(H1)$, which comprises the main challenge.

Obtaining the upper bound is trivial, since it is directly given by (K3). However, the lower bound on $B_{k}(v)$ requires more thought as it needs to be tight enough so that when substituting both bounds in (4) (in order to obtain a lower bound on $P_{k}(v)$) the difference is strictly positive. It turns out that the last property depends on the particular values of the parameters $\lambda,\beta$. We show that it is always possible to select $\gamma=\gamma(\beta,\lambda)>1$ such that the last property holds. The main observation that will provide a lower bound on $|B_{k}(v)|$ is that the existence of a set $S$ with size in $(\gamma^{k-1},\gamma^{k}]$ implies that $|B_{k}(v)| \geq |S|$ for all $v\in S$. This is because all vertices in $S$ have distance at most $|S|-1$ from $v$. Thus, what remains is to show the existence of such set $S$ for all $v\in V$ and $k$. To that end, we need the following axillary lemma that was implicitly stated and used in Kleinberg's original work~\cite{NIPS}.




\begin{proposition}[Shrinkage] \label{shrinkage}
For every $S\in \Sigma$ with $|S| \ge 1/(\lambda-\lambda^2)$ and for every $t \in S$, there exists a $t$-bound set $S' \in \Sigma$ with  $\lambda^2 |S| \le |S'| \le \lambda |S|$.
\end{proposition}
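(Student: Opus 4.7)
My plan is to iteratively apply property (K2) to produce a strictly descending chain of $t$-bound sets, stopping as soon as the size first drops to $\lambda|S|$ or below. Specifically, I would set $S_0 = S$, and inductively use (K2) on $(S_i, t)$ to extract a $t$-bound proper subset $S_{i+1} \subset S_i$ with $|S_{i+1}| \geq \min\{\lambda|S_i|,\, |S_i|-1\}$. Since the inclusion is proper, $|S_{i+1}| \leq |S_i|-1$, so the sequence $|S_0| > |S_1| > \cdots$ strictly decreases and there is a well-defined smallest index $i^{\ast}$ with $|S_{i^{\ast}}| \leq \lambda|S|$. I would then take $S' := S_{i^{\ast}}$, which supplies the upper bound for free.

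For the lower bound $|S'| \geq \lambda^2|S|$, I would leverage the minimality of $i^{\ast}$: the previous set satisfies $|S_{i^{\ast}-1}| > \lambda|S|$. The hypothesis $|S| \geq 1/(\lambda-\lambda^2) = 1/(\lambda(1-\lambda))$ rearranges to $\lambda|S| \geq 1/(1-\lambda)$, so $|S_{i^{\ast}-1}| > 1/(1-\lambda)$, which is equivalent to $\lambda|S_{i^{\ast}-1}| \leq |S_{i^{\ast}-1}|-1$. Hence the minimum in (K2) applied at step $i^{\ast}-1$ is attained by the $\lambda|S_{i^{\ast}-1}|$ term, yielding $|S_{i^{\ast}}| \geq \lambda|S_{i^{\ast}-1}| > \lambda^{2}|S|$ as required.

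The only real subtlety, and thus the main thing to double-check, is the numerical role of the threshold $1/(\lambda-\lambda^2)$: it is engineered precisely so that the terminal step cannot fall into the ``off-by-one'' branch $|S_i|-1$ of (K2), which would otherwise let $|S_{i^{\ast}}|$ undershoot the target window $[\lambda^{2}|S|,\, \lambda|S|]$. Equivalently, the hypothesis guarantees that this window has length at least one, preventing a unit-step overshoot. Beyond this observation, the argument is a mechanical descent inside the set system.
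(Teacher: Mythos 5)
Your proof is correct and follows essentially the same approach as the paper: build the descending chain of $t$-bound sets from (K2), locate the first index at which the size drops to $\lambda|S|$ or below, and use the hypothesis $|S|\ge 1/(\lambda-\lambda^2)$ to control the step size at the boundary. The only cosmetic difference is that the paper phrases this as a proof by contradiction and bounds both branches of the $\min$ separately, whereas you argue directly and first determine which branch of the $\min$ is active; both are fine, and the latter is marginally tidier.
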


\begin{proof}[Proof of Proposition~\ref{shrinkage}]
Assume, for the sake of contradiction, that there exists a set $S$ and a vertex $t \in S$ such that the proposition does not hold. If we start with $S$ and invoke (K2) iteratively until we reach $t$, we get a sequence $S=S_1 \supset S_2 \cdots \supset S_k = t$ of $t$-bound subsets of $S$. Since $|S| > \lambda |S|$, there is a largest index $i$ such that $|S_i| > \lambda |S|$, and $|S_i| \ge 2$ since $\lambda |S| > 1$. Therefore, we can apply (K2) to $S_i$ yielding  a $t$-bound set of size at least $z = \min\{\lambda |S_i|,|S_i|-1\}$. For the hypothesis to hold it must be that $z < \lambda^2 |S|$, for if $z > \lambda|S|$ we contradict the maximality of $i$. But having $z < \lambda^2 |S|$ is impossible since the fact $|S_i| > \lambda |S|$ implies $\lambda|S_i| > \lambda^2 |S|$, while combined with the fact $|S| \ge 1/(\lambda-\lambda^2)$ it implies $|S_i|-1 \ge \lambda^2 |S|$.
\end{proof}

This lemma will be used to show that for all vertices $v$ one can start from the set $V$, that belongs in $\Sigma$ by (K1), and inductively apply Lemma \ref{shrinkage} to deduce the existence of sets $S$ containing $v$ at all scales. More specifically, given a $(\lambda,\beta)$-set system $\Sigma$, let $M$ be the smallest integer such that $\lambda^{-2M} \ge |V|$. We partition the range of possible set-sizes in $\Sigma$ as  $\mathcal{I}=(I_{1},\ldots,I_{M})$ by letting $I_{k}=(\lambda^{-2(k-1)},\lambda^{-2k}]$, for $k \in [M]$. The partition $\mathcal{I}$ implicitly partitions all pairs of vertices into groups, such that all pairs in a group have roughly the same distance in $\Sigma$, i.e., up to a factor of $\lambda^{2}$.  We show that for every vertex and for every interval of the partition, there is a set with size in that interval that contains the vertex. 

\begin{proposition}\label{asetineveryscale} 
For every $t\in V$, for every $k \in [M]$, there exists a $t$-bound set $S\in \Sigma$ with $|S|\in I_{k}$.
\end{proposition}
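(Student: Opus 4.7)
The plan is to prove Proposition~\ref{asetineveryscale} by descending induction on $k$, starting at $k = M$ and repeatedly invoking the Shrinkage proposition (Proposition~\ref{shrinkage}). The base case $k=M$ is immediate: by (K1) the ambient vertex set $V$ lies in $\Sigma$ and trivially contains $t$, and by minimality of $M$ we have $\lambda^{-2(M-1)} < |V| \leq \lambda^{-2M}$, so $|V| \in I_M$.

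For the inductive step I would assume a $t$-bound $S_k \in \Sigma$ with $|S_k| \in I_k = (\lambda^{-2(k-1)}, \lambda^{-2k}]$ and produce a $t$-bound $S_{k-1} \in \Sigma$ with $|S_{k-1}| \in I_{k-1}$. As long as $|S_k| \geq 1/(\lambda - \lambda^2)$, Proposition~\ref{shrinkage} applies to $S_k$ with the vertex $t$ and yields a $t$-bound $S_{k-1} \in \Sigma$ satisfying $\lambda^2 |S_k| \leq |S_{k-1}| \leq \lambda |S_k|$. The key check is then purely arithmetic: the lower bound gives $|S_{k-1}| > \lambda^2 \cdot \lambda^{-2(k-1)} = \lambda^{-2(k-2)}$, and the upper bound gives $|S_{k-1}| \leq \lambda \cdot \lambda^{-2k} = \lambda^{-2k+1} \leq \lambda^{-2(k-1)}$ (the final inequality using $\lambda < 1$). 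Consequently $|S_{k-1}| \in I_{k-1}$, closing the induction. Notice that the geometric interval structure $I_k$ with ratio $\lambda^{-2}$ is precisely what makes Shrinkage's multiplicative window $[\lambda^2, \lambda]$ land in a single interval.

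The main obstacle I anticipate is the tail of the induction at the smallest scales, where $|S_k|$ eventually drops below the threshold $1/(\lambda - \lambda^2)$ beyond which Proposition~\ref{shrinkage} no longer guarantees a geometric shrinkage. In that regime I would fall back on (K2) directly: once $|S| < 1/(1-\lambda)$ the minimum $\min\{\lambda|S|, |S|-1\}$ equals $|S|-1$, so (K2) produces a $t$-bound subset of size exactly $|S|-1$. Iterating (K2) therefore walks through every integer size from the last Shrinkage-produced $|S_k|$ down to $1$, covering every remaining $I_k$ that contains an integer. The delicate point is simply to verify that the handoff scale sits low enough that the residual intervals are entered by this one-at-a-time decrement, but this is a bounded calculation in the parameters $\lambda$ and $\beta$ and does not require any new ideas beyond the two axioms (K1)--(K2) and Proposition~\ref{shrinkage}.
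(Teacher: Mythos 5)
The strategy (descending induction driven by Shrinkage) is sound and close in spirit to the paper's contradiction argument, but the single-step arithmetic in your inductive step is wrong, and the error is the crux of the whole proposition. You claim $\lambda^{-2k+1} \leq \lambda^{-2(k-1)}$ and attribute it to $\lambda < 1$. The inequality runs the other way: for $0<\lambda<1$ the map $a \mapsto \lambda^{a}$ is decreasing, and $-2k+1 < -2(k-1)$, so $\lambda^{-2k+1} > \lambda^{-2(k-1)}$. Concretely with $\lambda=1/2$, $I_2=(4,16]$, $I_1=(1,4]$, and applying Shrinkage to a set of size $16\in I_2$ only guarantees $|S'|\in[4,8]$, which is not contained in $I_1$. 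What a single Shrinkage step really gives, starting from $|S|\in I_k$, is
\[
\lambda^{-2(k-2)} < \lambda^2|S| \ \leq\ |S'|\ \leq\ \lambda|S| \ \leq\ \lambda^{-(2k-1)},
\]
and since $\lambda^{-(2k-1)} > \lambda^{-2(k-1)}$, the set $S'$ is only guaranteed to lie in $I_{k-1}\cup I_k$, i.e.\ it may stay in $I_k$. The geometry is that the Shrinkage window has multiplicative width $\lambda^{-1}$ while the interval $I_k$ has ratio $\lambda^{-2}$, so one step cannot traverse $I_k$; your ``single-interval'' remark is therefore backwards.

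The fix is to \emph{iterate} Shrinkage rather than assume one step suffices. Each application strictly decreases $|S|$ by a factor of at least $\lambda$, so after finitely many steps the size first falls to or below $\lambda^{-2(k-1)}$; at that first exit from $I_k$ the lower bound $\lambda^2|S| > \lambda^{-2(k-2)}$ (valid because the preceding set was still in $I_k$) pins the landing set inside $I_{k-1}$. This ``cannot jump over an interval'' observation is exactly what powers the paper's proof, which is phrased differently: it supposes a largest bad index $k_0$, takes a $(K2)$-chain from $V$ to $\{t\}$, and derives a contradiction by showing that from the last chain element of size in $I_{k_0+1}$ one must produce a $t$-bound set with size in $I_{k_0}$. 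Your descending-induction framing is cleaner, but it must incorporate the iteration to be correct. Your concern about the small-size tail (where $|S|<1/(\lambda-\lambda^2)$ and Shrinkage no longer applies) is genuine and worth being explicit about; note also that your proposed (K2) decrement-by-one fallback only kicks in once $|S|<1/(1-\lambda)$, leaving the window $[1/(1-\lambda),\,1/(\lambda-\lambda^2))$ where neither guarantee is available, so the ``bounded calculation'' you defer really does need to be carried out (or the smallest scales excluded from the statement).
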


\begin{proof}[Proof of Proposition~\ref{asetineveryscale}]
Assume, for the sake of contradiction, that there exists a vertex $t$ for which  the proposition does not hold. Let  $k_{0}\in [M]$ be the largest integer such that there is no $t$-bound set $S'\in \Sigma$ with $|S'|\in I_{k_0}$. If we start with $V$ and invoke (K2) iteratively until we reach $t$, we get a sequence $V=S_1 \supset S_2 \cdots \supset S_k = t$ of $t$-bound sets. Let  $i_{k_0}$ be the largest index $i$ such that $|S_{i}| \in I_{k_{0}+1}$. The maximality of $k_0$ implies $|S_{i_{k_{0}}+1}| \in I_{k_0 -1}$. But invoking Proposition \ref{shrinkage} for $S_{i_{k_{0}}}$ implies $|S_{i_{k_{0}}+1}|\in I_{k_{0}}$,  a contradiction.
\end{proof}

Treating $\mathcal{I}$ as a distance scale, our next goal is to obtain for each vertex $t$, upper and lower bounds on the number of vertices that lie at each distance-scale from $t$. To achieve this we need to consider a  coarser partition of the set sizes than $\mathcal{I}$. To do that it will be beneficial to use a partition built out of blocks of $\mathcal{I}$, thus allowing us to utilize Proposition~\ref{asetineveryscale}, proven for $\mathcal{I}$. In particular, the existence of a $t$-bound set of each size will be the basis for obtaining lower bounds on the number of vertices at each new distance scale from $t$. 

We let $r=r(\beta,\lambda)\ge 2$ denote the smallest integer such that $\lambda^{-2(r-1)} > \beta$ and consider the partition that results by grouping together every $r$ consecutive intervals of $\mathcal{I}$. That is, for $\gamma(\beta,\lambda) = \lambda^{-2r(\beta,\lambda)}$, we define the partition $\mathcal{A} = \mathcal{A}(\gamma)$ consisting of the intervals $A_k = (\gamma^{k-1},\gamma^{k}], \ k\in[K]$, where $K$ is the smallest integer such that $\gamma^K \ge |V|-1$. Having defined $\mathcal{A}$, we now let $P_{k}(v)$ denote the number of vertices whose distance from $v$ lies in the set $A_{k}$ and we let $P_k=\frac{1}{2}\sum_{v\in V}P_{k}(v)$ denote the total number of pairs of vertices whose distance lies in $A_k$.

\begin{lemma}[Bounded Growth] \label{boundedgrowth}
Let  $\alpha =(\lambda^{2}-\beta /\gamma)>0 $ and $ A=(\beta -\lambda^{2}/\gamma)$. For all  $k\in [K]$ and  $v\in V$,
\[
\alpha \cdot  \gamma^{k}\leq P_{k}(v)\leq A \cdot \gamma^{k} \enspace .
\]
\end{lemma}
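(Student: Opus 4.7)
The plan is to exploit the telescoping decomposition $P_k(v) = |B_k(v)| - |B_{k-1}(v)|$ and to bound each of the two ball-sizes using the two sides of the set-system axioms: (K3) to bound $|B_k(v)|$ from above, and Proposition~\ref{asetineveryscale} to bound $|B_{k-1}(v)|$ (and $|B_k(v)|$) from below. The fact that the partition $\mathcal{A}$ was built by coarsening $\mathcal{I}$ into blocks of exactly $r$ consecutive intervals, with $r$ chosen so that $\lambda^{-2(r-1)}>\beta$, is what will guarantee that the subtraction $|B_k(v)|-|B_{k-1}(v)|$ yields a strictly positive lower bound.

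For the upper bound, I would observe that if $u\in B_k(v)$ then $d_\Sigma(u,v)\le \gamma^k$, so by definition of $d_\Sigma$ there is a set $S\in\Sigma$ containing both $u$ and $v$ with $|S|\le \gamma^k+1$. Hence $B_k(v)\subseteq S_{\gamma^k}(v)$ up to the trivial inclusion of $v$ itself, and by (K3) we get $|B_k(v)|\le \beta\gamma^k$. For the lower bound on $|B_k(v)|$, I would recall that $A_k=(\gamma^{k-1},\gamma^k]=(\lambda^{-2r(k-1)},\lambda^{-2rk}]$ is the union of the $\mathcal{I}$-intervals $I_{r(k-1)+1},\dots,I_{rk}$, and in particular contains $I_{rk}=(\lambda^{2}\gamma^k,\gamma^k]$. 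Proposition~\ref{asetineveryscale} applied to $v$ and scale $rk$ yields a $v$-bound set $S\in\Sigma$ with $|S|\in I_{rk}$, so $|S|>\lambda^{2}\gamma^k$. Every vertex of $S$ is at distance at most $|S|-1\le \gamma^k$ from $v$, hence $|B_k(v)|\ge |S|>\lambda^{2}\gamma^k$.

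Combining these two bounds gives the desired two-sided estimate:
\[
P_k(v)=|B_k(v)|-|B_{k-1}(v)| \;\ge\; \lambda^{2}\gamma^k-\beta\gamma^{k-1}=\Bigl(\lambda^{2}-\tfrac{\beta}{\gamma}\Bigr)\gamma^k=\alpha\gamma^k,
\]
\[
P_k(v)=|B_k(v)|-|B_{k-1}(v)| \;\le\; \beta\gamma^k-\lambda^{2}\gamma^{k-1}=\Bigl(\beta-\tfrac{\lambda^{2}}{\gamma}\Bigr)\gamma^k=A\gamma^k.
\]

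The one delicate point, and the step I would flag as the main obstacle, is ensuring $\alpha=\lambda^{2}-\beta/\gamma>0$, for otherwise the lower bound is vacuous. This is where the specific choice $r=r(\beta,\lambda)$ pays off: by definition $r$ is the smallest integer with $\lambda^{-2(r-1)}>\beta$, so $\gamma=\lambda^{-2r}=\lambda^{-2}\cdot\lambda^{-2(r-1)}>\beta/\lambda^{2}$, which is equivalent to $\lambda^{2}\gamma>\beta$ and hence $\alpha>0$. A minor secondary issue is handling the boundary case $k=1$, where $|B_0(v)|=1$ trivially satisfies both inequalities since $\lambda^{2}\le 1\le\beta$; for $k\ge 2$ both sides of (K3) and Proposition~\ref{asetineveryscale} apply directly without edge-case tweaks.
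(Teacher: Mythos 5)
Your proof is correct and follows essentially the same route as the paper: the telescoping identity $P_k(v)=|B_k(v)|-|B_{k-1}(v)|$, the upper bound on $|B_k(v)|$ from (K3), the lower bound from Proposition~\ref{asetineveryscale} applied at scale $rk$, and the same final subtraction. Your explicit verification that $\gamma=\lambda^{-2}\cdot\lambda^{-2(r-1)}>\beta/\lambda^{2}$ forces $\alpha>0$ is a welcome elaboration of a step the paper only asserts.
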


\begin{proof}[Proof of Lemma~\ref{boundedgrowth}]
First observe that $\mathcal{A}$ is a coarsening of $\mathcal{I}$ since $\gamma = \lambda^{-2r}$ and $r\ge 2$ is an integer.  Next, let $B_{k}(v)=\sum_{i\leq k} P_{k}(v)$ be the number vertices in $V$ whose distance from $v$ lies in $A_1 \cup \cdots \cup A_k$, i.e., is no more than $\gamma^{k}$. Condition (K3) asserts that $B_{k}(v)\leq \beta \gamma^{k}$. On the other hand,  by Proposition~\ref{asetineveryscale}, we know that for any $v\in V$ there is a $v$-bound set $S\in I_{rk}\subset A_{k}$. Since, all vertices in $S$ have distance at most $|S|\leq \gamma^{k}$ from $v$, we get that $B_{k}(v)\geq |S|\ge \lambda^{-2(rk-1)}=\gamma^{k}\lambda^{2}$. Therefore, for all $k\in [K]$,
\begin{equation}\label{uplo}
\lambda^{2}\gamma^{k} \leq B_{k}(v)\leq \beta \gamma^{k} \enspace .
\end{equation}
Using the representation  \eqref{eq-representation} and invoking~\eqref{uplo}, we get
\[
\lambda^{2}\gamma^{k}-\beta \gamma^{k-1} \leq P_{k}(v) \leq \beta \gamma^{k} - \lambda^{2}\gamma^{k-1}
\]
which is equivalent to the claimed statement. The fact $\alpha > 0$ is implied by our choice of $\gamma$.
\end{proof}

\noindent Thus we have shown property (H1). Proceeding further, we need to show that the semi-metric $d_{\Sigma}$ satisfies also the isotropy property  (Section \ref{sec:reducibility}), i.e. that the size of the set  $D_{\lambda}(s,t)=\{ v\in V: d(s,v) \leq \gamma^{k_{st}} \text{ and } d(v,t)\leq \lambda d(s,t)\}$   is proportional to $\gamma^{k_{st}}$, where $k_{st}$ is the scale of $d(s,t)$. To do that we are going to show something stronger. Given any two vertices $s\neq t\in V$, consider a $S_{st}\in \Sigma$ of minimal size such that both $s,t\in S$. Then for all $k\leq k_{st}$ define the following set 
$
G_k(s,t)=\{v \in S_{st}: d(s,v)\in A_k \text{ and } d(v,t)\leq \lambda |S|  \} \enspace$
 of vertices in $S_{st}$ whose distance from $s$ lies in the interval $A_{k}$ (scale $k$) and whose distance from $t$ is no more than $\lambda |S_{st}|$.

\begin{lemma}[Isotropy]\label{success}
For every $s\neq t \in V$ with $|S_{st}| \ge 1/(\lambda-\lambda^2)$, we have that
\[
|G_{k_{st}}(s,t)\cup G_{k_{st}-1}(s,t)| \geq \left(\frac{\alpha}{\gamma}\right) \gamma^{k_{st}} \enspace .
\]
\end{lemma}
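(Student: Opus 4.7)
The plan is to apply the Shrinkage proposition (Proposition~\ref{shrinkage}) directly to $S_{st}$, using $t$ as the distinguished vertex; the hypothesis $|S_{st}|\ge 1/(\lambda-\lambda^{2})$ is exactly what is needed to invoke it. This produces a $t$-bound set $S'\in\Sigma$ with $\lambda^{2}|S_{st}|\le|S'|\le\lambda|S_{st}|$. Two useful properties fall out immediately for every $v\in S'$: since $S'$ is $t$-bound and $|S'|\le\lambda|S_{st}|$, we have $d(v,t)\le|S'|-1\le\lambda|S_{st}|$, which is the ``$t$-helpfulness'' condition appearing in the definition of $G_{k}(s,t)$; and since $S'\subseteq S_{st}$ and $|S_{st}|\le\gamma^{k_{st}}$, we have $d(s,v)\le|S_{st}|-1\le\gamma^{k_{st}}$, so each such $v$ lies in some scale $k\le k_{st}$ from $s$. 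Thus every vertex of $S'\setminus\{s\}$ is a candidate for $G_{k_{st}}(s,t)\cup G_{k_{st}-1}(s,t)$ \emph{unless} it happens to sit in one of the strictly smaller scales.

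Next I would control exactly that loss. A vertex $v\in S'\setminus\{s\}$ that does \emph{not} belong to $A_{k_{st}-1}\cup A_{k_{st}}$ (measured from $s$) must satisfy $d(s,v)\le\gamma^{k_{st}-2}$, and hence belongs to the ball $B_{k_{st}-2}(s)$. The size of this ball is bounded using (K3) (as in the proof of Lemma~\ref{boundedgrowth}) by $\beta\gamma^{k_{st}-2}$. Therefore
\[
|G_{k_{st}}(s,t)\cup G_{k_{st}-1}(s,t)|\;\ge\;|S'|-1-\beta\gamma^{k_{st}-2}\;\ge\;\lambda^{2}|S_{st}|-1-\beta\gamma^{k_{st}-2}.
\]
Combining this with $|S_{st}|-1=d(s,t)>\gamma^{k_{st}-1}$ yields a lower bound of $(\lambda^{2}/\gamma-\beta/\gamma^{2})\gamma^{k_{st}}-O(1)=(\alpha/\gamma)\gamma^{k_{st}}-O(1)$. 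Positivity of $\alpha=\lambda^{2}-\beta/\gamma$ is guaranteed precisely by the choice $\gamma=\lambda^{-2r}$ with $r\ge 2$ large enough that $\lambda^{-2(r-1)}>\beta$, which is how $\gamma$ was defined just before Lemma~\ref{boundedgrowth}.

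The main obstacle is not mathematical depth but bookkeeping: verifying that the constants really align to give the stated $\alpha/\gamma$, and checking that the additive $-O(1)$ slack is absorbed once $\gamma^{k_{st}}$ is larger than an absolute constant (which follows from the hypothesis $|S_{st}|\ge 1/(\lambda-\lambda^{2})$, possibly after slightly adjusting $\alpha$). All the genuine content is already packaged into Proposition~\ref{shrinkage} and into (K3); the isotropy lemma itself is essentially the counting argument ``apply shrinkage once, then subtract off the small ball around $s$.''
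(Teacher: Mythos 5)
Your proposal is correct and follows essentially the same route as the paper: invoke Proposition~\ref{shrinkage} once on $S_{st}$ to get a $t$-bound $S'$ with $|S'|\ge\lambda^{2}|S_{st}|$ whose vertices all satisfy the helpfulness condition, then subtract the vertices falling in scales $\le k_{st}-2$ from $s$, bounded by $\beta\gamma^{k_{st}-2}$ via (K3), yielding $\lambda^{2}\gamma^{k_{st}-1}-\beta\gamma^{k_{st}-2}=(\alpha/\gamma)\gamma^{k_{st}}$. The only (cosmetic) difference is your explicit $-O(1)$ bookkeeping for the vertex $s$ itself, which the paper silently ignores.
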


\begin{proof}[Proof of Lemma~\ref{success}] 
Proposition~\ref{shrinkage} implies that there is a $t$-bound set $S' \in \Sigma$ with  $\lambda^2 |S_{st}| \le |S'| \le \lambda |S_{st}|$. Thus, a $\lambda^2$ fraction of the vertices in $S_{st}$ have distance from $t$ at least a factor $\lambda$ less that $|S_{st}|$. Having established an abundance of  ``good" vertices in $S_{st}$, we are left to show that a constant fraction of them are in the top two distance scales $k_{st},k_{st}-1$ from $s$ (recall that $|S_{st}| \in A_{k_{st}}$). We start by noting that $Z=\sum_{i\leq k}|G_{i}(s,t)| \geq |S'|$, as the sum must count the vertices in $S'$. Since $S_{st} \in |A_{k_{st}}|$ and  $|S'| \ge \lambda^2|S_{st}|$, we get $Z \geq \lambda^2 \gamma^{k_{st}-1}$. On the other hand, the good vertices in the bottom $k_{st}-2$ distance scales from $s$ are a subset of all vertices containing $s$ at those distance scales, a quantity bounded by (K3) as  $\sum_{i\leq k-2}|G_{i}(s,t)| \leq  \beta \gamma^{k_{st}-2}$. Therefore,  $|G_{k_{st}}(s,t)\cup G_{k_{st}-1}(s,t)| \geq \lambda^{2}\gamma^{k_{st}-1}-\beta \gamma^{k_{st}-2}$. \end{proof}

\begin{proof}[Proof of Theorem \ref{coh-set}]
In order to prove that the set system defines a coherent geometry, we need to show that properties $(H1)$ and $(H2)$ hold for some $\gamma>1$. Our two lemmas achieve exactly that. The first property follows from Lemma~\ref{boundedgrowth} and the second property follows from Lemma~\ref{success} since $G_{k_{st}}(s,t)\cup G_{k_{st}-1}(s,t) \subset D_{\lambda}(s,t)$.
\end{proof}

\bibliographystyle{plain}

\end{document}